\documentclass[journal]{IEEEtran}
\usepackage[explicit]{titlesec}
\usepackage{multirow}
\usepackage[table]{xcolor}
\usepackage{colortbl}
\usepackage{graphicx}
\usepackage{color, soul}
\usepackage{verbatim}
\usepackage{epstopdf}
\usepackage{setspace}
\usepackage{subfigure} 
\usepackage{algorithmic}
\usepackage{amsmath}
\usepackage{amsthm}
\usepackage{textcase}
\usepackage{extarrows}
\usepackage{verbatim}
\usepackage[left]{lineno}
\usepackage{cite}
\usepackage{amsmath}
\usepackage{mathrsfs}
\newtheorem{theorem}{Theorem}

 \usepackage {mathrsfs}

\usepackage{bbm}
\usepackage[colorlinks, linkcolor=blue, citecolor=blue]{hyperref}
\usepackage{amssymb}
\newtheorem{remark}{Remark}
\newtheorem{property}{Property}
\newcommand{\RNum}[1]{\uppercase\expandafter{\romannumeral #1\relax}}
\newtheorem{Lemma}{Lemma}

\newtheorem{corollary}{Corollary}
\usepackage{stfloats}
\newtheorem{problem}{Problem}
\newtheorem{proposition}{Proposition}
\usepackage{color}
\usepackage{tikz,xcolor}

\newcommand{\finished}{\hfill$\blacksquare$}
\definecolor{lime}{HTML}{A6CE39}
\titlespacing{\section}{0pt}{1.2ex plus .0ex minus .0ex}{.3ex plus .0ex}
\titlespacing{\subsection}{0pt}{1.2ex plus .0ex minus .0ex}{.3ex plus .0ex}

\makeatletter
\DeclareRobustCommand{\orcidicon}{%
	\begin{tikzpicture}
	\draw[lime, fill=lime] (0,0) 
	circle [radius=0.16] 
	node[white] {{\fontfamily{qag}\selectfont \tiny ID}};    \draw[white, fill=white] (-0.0625,0.095)
	circle [radius=0.007];    \end{tikzpicture}
	\hspace{-2mm}}
\foreach \x in {A, ..., Z}{%
	\expandafter\xdef\csname orcid\x\endcsname{\noexpand\href{https://orcid.org/\csname orcidauthor\x\endcsname}{\noexpand\orcidicon}}
}

\newcommand*\bigcdot{\mathpalette\bigcdot@{.5}}
\newcommand*\bigcdot@[2]{\mathbin{\vcenter{\hbox{\scalebox{#2}{$\m@th#1\bullet$}}}}}
\makeatother

\usepackage[linesnumbered,ruled,vlined]{algorithm2e}

\definecolor{mydarkblue}{RGB}{65, 105, 225}

\hyphenation{op-tical net-works semi-conduc-tor}
\raggedbottom

\begin{document}
\title{Unified Upper Bounds on the ML decoding Error Probability of Spinal Codes over Fading Channels}
	\author{Aimin Li\orcidA{},
			\emph{Graduate Student Member, IEEE,}
	Xiaomeng Chen\orcidC{}, Shaohua Wu\orcidB{}, 
	\emph{Member, IEEE},\\
	Gary C.F. Lee\orcidI{}, 
	\emph{Member, IEEE},
	and Sumei Sun\orcidF{},
	\emph{Fellow, IEEE.} \vspace{-10mm}
\thanks{	
 An earlier version of this paper was presented in part at the 2023 IEEE International Symposium on Information Theory (IEEE ISIT 2023) \cite{chen2023tight}. This work has been supported in part by the Guangdong Basic and Applied Basic Research Foundation under Grant no. 2022B1515120002, and in part by the National Natural Science Foundation of China under Grant no. 61871147, and in part by the Major Key Project of PCL under Grant no. PCL2024A01.  (\textit{Corresponding author: Shaohua Wu.})

%
A. Li and X. Chen contribute equally to this work, and are with the Department of Electronic Engineering, Harbin Institute of Technology (Shenzhen), Shenzhen, China 518055. E-mail: liaimin@stu.hit.edu.cn; 23s052026@stu.hit.edu.cn. 

S. Wu is with the Department of Electronic Engineering, Harbin Institute of Technology (Shenzhen), Shenzhen, China 518055, and also with the Peng Cheng Laboratory, 518055, China. E-mail:  hitwush@hit.edu.cn. 

Gary C.F. Lee and Sumei Sun are with the Institute for Infocomm Research (I$^2$R), Agency for Science, Technology and Research (A*STAR), Singapore 138632. Email:  Gary\_Lee@i2r.a-star.edu.sg.
sunsm@i2r.a-star.edu.sg;
 
%
%
}
}

\maketitle
\allowdisplaybreaks
\begin{abstract}
\textcolor{black}{Performance evaluation of particular channel coding has been a significant topic in coding theory, often involving the use of bounding techniques. This paper focuses on the new family of \textit{capacity-achieving} codes, Spinal codes, to provide a comprehensive analysis framework to tightly upper bound the block error rate (BLER) of Spinal codes in the finite block length (FBL) regime. First, we resort to a variant of the \textit{Gallager random coding bound} to upper bound the BLER of Spinal codes over the fading channel. Then, this paper derives a new bound without resorting to the use of \textit{Gallager random coding bound}, achieving provable tightness over the wide range of signal-to-noise ratios (SNR). The derived BLER upper bounds in this paper are generalized, facilitating the performance evaluations of Spinal codes over different types of fast fading channels. Over the {Rayleigh}, {Nakagami-m}, and {Rician} fading channels, this paper explicitly derived the BLER upper bounds on Spinal codes as case studies. Based on the bounds, we theoretically reveal that the \textit{tail transmission pattern} (TTP) for ML-decoded Spinal codes keeps optimal in terms of reliability performance. Simulations verify the tightness of the bounds and the insights obtained.}
\end{abstract}

\begin{IEEEkeywords}
Spinal codes, block error rate (BLER), fading channels, ML decoding, upper bounds, finite block length.
\end{IEEEkeywords}
\IEEEpeerreviewmaketitle

\section{Introduction}
\subsection{Background}
First proposed in 2011 \cite{2011Spinal}, Spinal codes are a new family of \textit{capacity-achieving rateless} codes \cite{balakrishnan2012randomizing}. The \textit{capacity-achieving} and \textit{rateless} properties enable Spinal codes with superior performance in ensuring reliable and high-efficiency communications over time-varying channels. In \cite{2012Spinal}, it has demonstrated that Spinal codes outperform Raptor codes \cite{Raptor,RCONC}, Strider codes \cite{Strider} and \textit{rateless} Low-Density Parity-Check (LDPC) codes \cite{LDPC} in terms of \textit{throughput} across a wide range of channel conditions and message sizes.

Owing to the superior \textit{rateless} and \textit{capacity-achieving} properties, Spinal codes have garnered substantial attention in the realm of coding design, leading to a plethora of research endeavors including Spinal coding structure design \cite{UEPspinal,7873303,yang2016two}, high-efficiency decoding mechanisms \cite{8653979,yang2015low}, compressive Spinal codes \cite{compressive2019}, punctured Spinal codes \cite{xu2019optimized,li2020spinal}, timeliness-oriented Spinal codes \cite{meng2021analysisaoi, meng2022analysis}, and Polar-Spinal concatenation codes\cite{xu2018high,liang2020raptor,dong2017towards,cao2022polar}. These studies offer deeper insights into Spinal codes. Yet, the \textit{theoretical} analysis, especially within the Finite Block Length (FBL) regime, remains nascent, which constrains their further advancement.

\subsection{Related Works and Motivations}
In coding theory, obtaining a closed-form expression for the block error rate (BLER) of channel codes in the FBL regime is significant. Such expressions facilitate accurate performance evaluations and highlight improvements in coding design. {However, obtaining exact closed-form expressions is usually challenging, arising from the intricate, typically non-linear operations involved in the channel coding process.} As an alternative, bounds are derived for performance evaluations \cite{NBLLBC}. Today, many tight bounds have been derived, including upper bounds on Polar codes \cite{polar1,polar2}, Turbo codes \cite{681398}, Raptor codes \cite{raptorerror}, LT codes \cite{LTerror1,LTerror2}, and the upper and lower bounds on the error probability of Maximum Likelihood (ML)-decoded linear codes \cite{CIT-009}.
However, Spinal codes, a new candidate of \textit{capacity-achieving} codes, remains relatively unexplored in terms of deriving \textit{tight, explicit} bounds.

Some works have conducted theoretical analysis of Spinal codes over the AWGN channel and the binary symmetric channel (BSC). In \cite{balakrishnan2012randomizing}, Balakrishnan \emph{et.al.} conducted an asymptotic rate analysis of Spinal codes and proved that Spinal codes are \textit{capacity-achieving} over both the AWGN and the BSC channels. In \cite{UEPspinal}, Yu \textit{et.al.} carried out the FBL analysis of Spinal codes and derived the BLER upper bounds over the AWGN and the BSC channels. The core idea in \cite{UEPspinal} is an introduction of the \textit{Random Coding Union} (RCU) bound \cite[\textit{Theorem 33}]{polyanskiy2010channel} (over the BSC) and a relaxed version of the \textit{Gallager random coding bound} \cite[\textit{Theorem 5.6.2}]{gallager1968information} (over the AWGN) to upper bound Spinal codes. In \cite{li2021new} and \cite{10614151}, we further tightened the FBL upper bound over the AWGN channel by characterizing the error probability as the volume of a hypersphere divided by the volume of a hypercube, improving the tightness of the bound in the high-SNR regime. However, almost all previous works are established over the BSC or AWGN channels. The FBL analysis of Spinal codes over fading channels remains a relatively unexplored area. 

{An exception work is \cite{li2021spinal}, where the \textit{Chernoff bound} is utilized to derive an upper bound on the BLER of Spinal codes, focusing specifically on the \textit{Rayleigh} fading channel without Channel State Information (CSI).} However, due to the {probability-convergent} nature of the \textit{Chernoff bound}, the  bound in \cite{li2021spinal} holds upon a \emph{confidence probability}, {\emph{i.e.}, it is a probabilistic upper bound rather than a deterministic one}. {Consequently, the bound lacks rigorous \textit{determinism} and its {\textit{applicability}} is restricted to Rayleigh fading without CSI.} \textit{In summary, a tight, deterministic, and generalized FBL bound on the BLER of Spinal codes over fading channels remains an open challenge.}



\subsection{Main Results and Contributions}
{This paper presents two \textit{tight}, \textit{deterministic}, and \textit{unified} upper bounds on the BLER of Spinal codes over fading channels.} Building upon the earlier version in \cite{chen2023tight}, this work achieves distinctive contributions:

	 \textcolor{black}{\textbf{Theory}: (1) We derive a new bound in Theorem \ref{coretheorem1}, which is based on the variant of \textit{Gallager random coding bound}. We find that compared to the bound based on the variant of \textit{Gallager random coding bound}, our approach in Theorem \ref{coretheorem} provably achieves tighter evaluations. (2) We unified the derived bounds in \cite{chen2023tight} into a cohesive framework. Upper bounds over different fading channels are unified into a compact, generalized expression. (3) Our framework extends beyond the real-number scope of \cite{chen2023tight} to complex mapping and fading. This represents the first work that analyzes Spinal codes over complex mapping scenarios. To address this challenge, we develop new methods and tools to facilitate the analysis.}

	 \textbf{Optimization}: Building upon the theoretical analysis, we formulate a problem aimed at minimizing the BLER to optimize the transmission pattern of Spinal codes. Initially, a greedy algorithm is proposed to derive the transmission pattern. Subsequently, we find that the solutions exhibit a regularity -- invariably leading to the tail transmission pattern (TTP). Thus, we explore the optimality of the TTP and establish that the TTP is optimal for ML-decoded Spinal codes. To our knowledge, this is the first work that unveils, through theoretical proof, that transmitting tail symbols can enhance the performance of Spinal codes.

	\begin{figure}
	\centering
	\includegraphics[width=	0.88\linewidth]{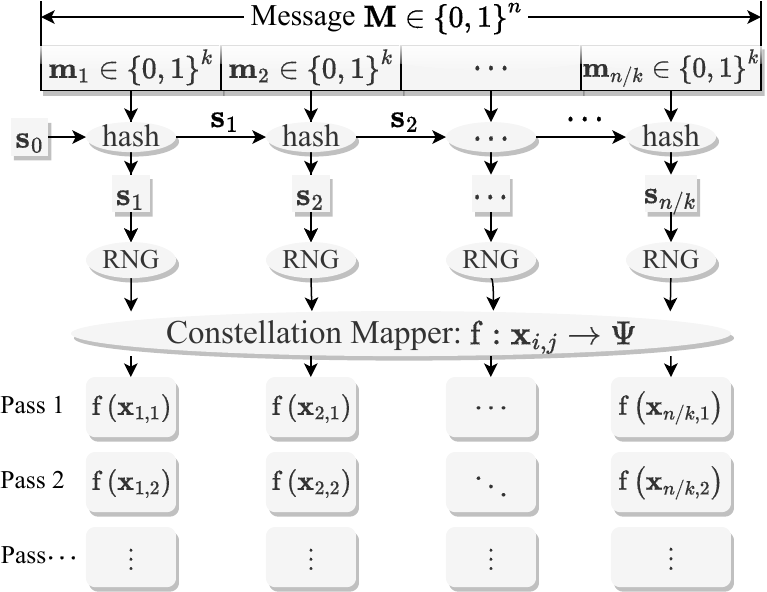}
	\caption{The encoding process of Spinal codes.}
	\label{figure1}
\end{figure}
\subsection{ Notations}
Bold symbols denote vectors or matrices. $\left\{0,1\right\}^v$ denotes a $v$-length binary sequence. $\mathfrak{R}[\cdot]$ and $\mathfrak{I}[\cdot]$ denote the real and imaginary parts of matrices, vectors, or scalars. $\exp\left\{\cdot \right\}$ represents the exponential function. $(\cdot)^\mathrm{H}$, ${(\cdot)}^*$, $\left\|\cdot\right\|_n$, $|\cdot|$, and $\Pr(\cdot)$ represents the Hermitian transpose, the complex conjugate, the $\ell^n$-norm, the modulus, and the probability. $\mathbb{P}_X(x)$ and $f_X(x)$ represent the probability mass function (PMF) and the probability density function (PDF) of the random variable $X$, respectively. $\mathbb{R}$, $\mathbb{C}$, $\mathbb{R}^L$, $\mathbb{C}^L$, and $\Psi^L$ denote the real space, complex space, $L$-dimensional real vector space, $L$-dimensional complex vector space, and $L$-fold Cartesian product of the set $\Psi$ respectively. $\mathbb{E}_X[\cdot]$ represents the expectation in terms of the random variable $X$. $\mathbb{N}$ and $\mathbb{N}^+$ denote the set of {non-negative integers} and positive integers, respectively. {$\mathcal{N}(0,\sigma^2)$ and $\mathcal{C}\mathcal{N}(0,\sigma^2)$ represent the zero-mean Gaussian distribution and the symmetric complex Gaussian distribution with variance $\sigma^2$, respectively.} $\mathbf{0}^v$ denotes the all-zero length-$v$ vector. For a positive integer $n$, $[n]$ denotes the set of integers from $1$ to $n$:  $[n]\triangleq\left\{1,2,\cdots, n\right\}$. $\Gamma(x) \triangleq \int_0^{\infty} e^{-t} t^{x-1} d t$ denotes the gamma function, $Q(x)\triangleq\frac{1}{\sqrt{2 \pi}} \int_x^{\infty} e^{-\frac{x^2}{2}}dx$ denotes the $Q$ function, and $I_0(x)$ represents the {zero-order} modified Bessel function of the first kind. 

\section{Preliminaries} \label{section II}
\subsection{Encoding Process of Spinal Codes}
This subsection introduces the encoding process of Spinal codes, as shown in Fig. \ref{figure1}. There are five key steps:
\begin{enumerate}
	\item \textit{Segmentation}: Divide an $n$-bit message $\mathbf{M}$ into $k$-bit segments $\mathbf{m}_i \in \left\{0,1 \right\}^k$, where $i\in[n/k]$.
	\item \textit{Sequentially Hashing}: The hash function\footnote{{The hash function considered in this paper possesses two critical properties: \textit{pairwise independence} and \textit{negligible collision probability} for sufficiently large $v$. The detailed properties and their proofs are provided in Appendix \ref{property}, which lays the foundation for the FBL analysis.}} $\mathcal{H}(\cdot)$ sequentially generates $v$-bit spine values $\mathbf{s}_i \in {\{0,1\}}^v$, with
	\begin{equation} \label{eqhash}
		\mathbf{s}_i = \mathcal{H}(\mathbf{s}_{i-1},\mathbf{m}_i),~i\in[n/k],~\mathbf{s}_0 = \mathbf{0}^v \text{.} 
		\footnote{The initial spine value $\mathbf{s}_0$ is known to both the encoder and the decoder and is usually set as $\mathbf{s}_0=0$ without loss of generality.}
	\end{equation}
	\item {\textit{Random Number Generator (RNG)}}: Each spine value $\mathbf{s}_i$ seeds an RNG to generate a binary pseudo-random uniform-distributed sequence $\{\mathbf{x}_{i,j}\}_{j \in \mathbb{N}^+}$. In this sequence, each $\mathbf{x}_{i,j}$ belongs to ${\{ 0,1 \}}^c$, where $c$ represents the length of $\mathbf{x}_{i,j}$. Here, $i$ is the index of spines and $j$ is the index of passes.
	\begin{equation}
		\mathrm{RNG:} ~\mathbf{s}_i \to \{\mathbf{x}_{i,j}\}_{j \in \mathbb{N}^+} ,
	\end{equation}
	\item \textit{Constellation Mapping}: The constellation mapper maps each $c$-bit symbol $\mathbf{x}_{i,j}$ to a channel input set $\Psi$:
	\begin{equation}
		\operatorname{f}: \mathbf{x}_{i,j}\rightarrow \Psi ,
	\end{equation}
	where $\operatorname{f}$ is the constellation mapping function and it converts each $c$-bit symbol $\mathbf{x}_{i,j}$ to the real space $\mathbb{R}$ or complex space $\mathbb{C}$ for transmission.
\end{enumerate}

\subsection{Channel Model} \label{section III}				
 We consider the flat fast fading channel, and thus the received symbol $y_{i,j}$ is generally expressed by
\begin{equation}
y_{i,j}=h_{i,j}\operatorname{f}(\mathbf{x}_{i,j})+n_{i,j},
\end{equation}
where $\operatorname{f}(\mathbf{x}_{i,j})\in\Psi$ is the coded symbols and $h_{i,j}$ is the corresponding fading coefficient. Under the complex-mapping constellation condition, $n_{i,j}$ follows the symmetric complex Gaussian distribution with $n_{i,j} \sim \mathcal{CN}(0,\sigma^2)$ and the distribution of $h_{i,j}$ are contingent on the type of fading channels.

\section{Bound Based on Gallager's Results} \label{sectionIV}
{Over the AWGN channel, a relaxed version of the standard \textit{Gallager random coding bound} \cite[\textit{Theorem 5.6.2}]{gallager1968information} was introduced in \cite[\textit{Theorem 4}]{UEPspinal} to upper bound the BLER of Spinal codes. However, a comprehensive FBL analysis of Spinal codes over fading channels remains an open research challenge.} 

{In this section, we extend the analytical framework developed for AWGN channels \cite[\textit{Theorem 4}]{UEPspinal} to fading channels in a two-step manner. We first derive a relaxed Gallager bound over the fading channel by leveraging an extension of \textit{Gallager bound} derived in \cite[\textit{Eq. (14)}]{karadimitrakis2017gallager}, which is detailed in subsection \ref{subA}. Then, building upon the derived relaxed Gallager Bound for fading channels, we establish an upper bound on the BLER of Spinal codes over fading channels in subsection \ref{subb}.}
\subsection{Gallager Bound over the Fading Channel}\label{subA}
In this subsection, we derive an {explicit} \textit{Gallager bound} over the general flat fast fading channel.
\begin{theorem}\label{gallagerbound1}{(Relaxed Gallager Bound for Fading Channels) For channel codes with codelength $L$, code rate $R$, and channel input set $\Psi$ transmitted over the fading channel with AWGN variance $\sigma^2$, the average BLER under ML decoding with perfect channel state information (CSI) is upper bounded by:}
\begin{equation}\label{11}
	\begin{aligned}
	&\Pr\left\{\mathcal{E}\right\}\le \\ &2^{LR}\cdot\left\{\mathbb{E}_H\left[\sum_{\beta,\alpha \in \Psi}\mathcal{Q}(\alpha)\mathcal{Q}(\beta)\exp\left\{-\frac{|H(\beta-\alpha)|^2}{4\gamma\sigma^2}\right\}\right]\right\}^L,
	\end{aligned}
\end{equation}
with $\gamma=1$ for complex fading channels and $\gamma=2$ for real fading channels\footnote{{Real fading channels typically appear in baseband systems employing real-valued modulation schemes such as Pulse Amplitude Modulation (PAM). In these scenarios, both the channel fading coefficient $h_{i,j}$ and the transmitted symbol $\operatorname{f}(\mathbf{x}_{i,j})$ are real numbers, leading to a simplified channel model $y_{i,j} = h_{i,j}\operatorname{f}(\mathbf{x}_{i,j}) + n_{i,j}$, where $n_{i,j}$ follows a real Gaussian distribution $\mathcal{N}(0,\sigma^2)$. This represents a special case of our general complex fading analysis where the imaginary components are zero.}}. Here $\mathcal{E}$ represents the event of decoding error, $H\in\mathbb{C}$ denotes a generic fading coefficient identically distributed as each $h_{i,j}$, $\mathcal{Q}(\cdot)$ denotes
the probability distribution of the channel input set $\Psi$, and $\beta, \alpha \in \Psi$ are arbitrary symbols from the channel input set $\Psi$. The double summation over $\beta$ and $\alpha$ accounts for all possible symbol pairs in the channel input alphabet $\Psi$.
\end{theorem}
\textit{Proof Sketch}. We leverage the standard \textit{Gallager bound} for a specific fading coefficient $H$ given in \cite[\textit{Eq. (7.3.20)}]{gallager1968information}, which upper bounds $\Pr\{\mathcal{E}|H\}$. Then, we express the overall BLER by averaging over all possible fading coefficients: $\Pr\{\mathcal{E}\} = \mathbb{E}_H[\Pr\{\mathcal{E}|H\}]$. For complex fading channel, the expectation $\mathbb{E}_H[\cdot]$ is solved by factoring the distribution into real and imaginary components and applying algebraic transformations. For real fading channel, the expectation $\mathbb{E}_H[\cdot]$ can be solved from the distribution of the channel fading cofficient $H$. The detalied proof is provided in Appendix \ref{bbb}.
\finished
\subsection{Upper Bound on the BLER of Spinal Codes}\label{subb}
With (\ref{11}) in hand, we can derive the upper bound on the BLER of Spinal codes over fading channels.
\begin{theorem}
	\label{coretheorem1}
	{(Upper Bound Based on Gallager's Reslut)} Consider $(n,k,c,\Psi,v\gg0)$  Spinal codes transmitted over a flat fast fading channel with AWGN variance $\sigma^2$\footnote{In the sequel, we use the shorthand $(n,k,c,\Psi,v\gg0)$ Spinal codes {to denote Spinal codes with message length $n$, segmentation parameter $k$, modulation parameter $c$, channel input set $\Psi$, and sufficiently large hash parameter $v$.}}, the average BLER under ML decoding with perfect CSI is upper bounded by
	\begin{equation}\label{14}
	P_e^{\text{Gallager}} = 1-\prod_{a\in[n/k]} \left(1-\epsilon_a^{\text{Gallager}}\right) ,
	\end{equation}
	{where $\epsilon_a^{\text{Gallager}}$ is given by:}
	\begin{equation} \label{epsilona11}
	\begin{aligned}
			\epsilon_a^{\text{Gallager}} =& 2^{k(n/k-a+1)-2cL_a}\times\\ &\left\{\mathbb{E}_H\left[\sum_{\beta,\alpha \in \Psi}\exp\left\{-\frac{|H(\beta-\alpha)|^2}{4\gamma\sigma^2}\right\}\right]\right\}^{L_a},
			\end{aligned}
	\end{equation}
	with $\gamma=1$ for complex fading channels and $\gamma=2$ for real fading channels. 
	Here $H$ characterizes the channel fading coefficient and $L_a = \sum_{i=a}^{n/k}\ell_i$, $\ell_i$ is the number of transmitted symbols generated from the spine value $\mathbf{s}_i$.
\end{theorem}
\noindent\textit{Proof.}
	The proof follows a similar structure to that presented in the proof of \cite[\textit{Theorem 4}]{UEPspinal}, with the key distinction being the substitution of \cite[\textit{Eq. (24)}]{UEPspinal} with our derived \eqref{11} in {Theorem \ref{gallagerbound1}}. This substitution directly leads to \eqref{epsilona11} and thus accomplishes the proof.
\finished

\begin{remark}
	{Our result obtained in this section is an extension of the upper bound given in \cite[\textit{Theorem 4}]{UEPspinal}. When $H\equiv1$ and $\gamma=2$, {Theorem \ref{coretheorem1}} simplifies to the bound for the AWGN channel derived in \cite[\textit{Theorem 4}]{UEPspinal}. }
\end{remark}
\section{Refined Upper Bound}
{In this section, we present a novel upper bound that provides a tighter characterization of the BLER of Spinal codes. Unlike the previous approach, this bound does not rely on Gallager's random coding technique, but leverages the inherent properties of Spinal codes to achieve greater precision. Through rigorous mathematical analysis in the proof of {Theorem \ref{compare}}, we demonstrate that the new upper bound derived in this section is provably tighter than the one established in  {Theorem \ref{coretheorem1}}.}
\subsection{Main Result}
The following theorem presents our main result on the BLER upper bound for Spinal codes. The proof is provided in subsection \ref{prooftheo4}.
\begin{theorem}
    \label{coretheorem}
(Refined Upper Bound) Consider $(n,k,c,\Psi,v\gg0)$ Spinal codes transmitted over a flat fast fading channel with AWGN variance $\sigma^2$, the average BLER given perfect CSI under ML decoding for Spinal codes can be upper bounded by
    \begin{equation}
      P_e^{\text{Spinal}} = 1-\prod_{a\in[n/k]} \left(1-\epsilon_a^{\text{Spinal}}\right) ,
    \end{equation}
 where $\epsilon_a^{\text{Spinal}}$ is
    \begin{equation}\label{epsilona}
	\epsilon_a^{\text{Spinal}} = \mathrm{min} \left\{ 1,\left(2^k-1\right)2^{n-ak} \cdot \mathscr{F} \left(L_a , \sigma,\gamma \right) \right\} ,
\end{equation}
and $\mathscr{F} \left(L_a , \sigma,\gamma \right)$ is defined as:
\begin{equation} \label{allinone}
	\begin{aligned}
		&\mathscr{F} \left(L_a , \sigma,\gamma \right)\triangleq\\&\sum_{t\in[N]} b_t\left(\sum_{\beta,\alpha \in \Psi}  2^{-2c}\mathbb{E}_{H}\left[\exp\left\{\frac{-{|H(\beta-\alpha)|}^2}{4 \gamma\sigma^2 \mathrm{sin}^2\theta_t}\right\}\right]\right)^{L_a},
	\end{aligned}
\end{equation}
{with $\gamma=1$ for complex fading channels and $\gamma=2$ for real fading channels.
Here, $H$ denotes the channel fading coefficient, $L_a = \sum_{i=a}^{n/k}\ell_i$, where $\ell_i$ is the number of transmitted symbols generated from the spine value $\mathbf{s}_i$. The sequences $\{\theta_t\}_{t=0}^N$ partitions the interval $[0,\frac{\pi}{2}]$ with $\theta_0=0$, $\theta_N=\frac{\pi}{2}$ and $\theta_0 < \theta_1 < \cdots < \theta_{N} $, the coefficient is defined as $b_t\triangleq\frac{\theta_t-\theta_{t-1}}{\pi}$, and $N$ can adjust the precision of the bound.}
\end{theorem}

{To fully understand the the bound presented in {Theorem \ref{coretheorem}}, two key questions should be addressed here:}
\begin{figure}
	\centering
	\includegraphics[width=0.8\linewidth]{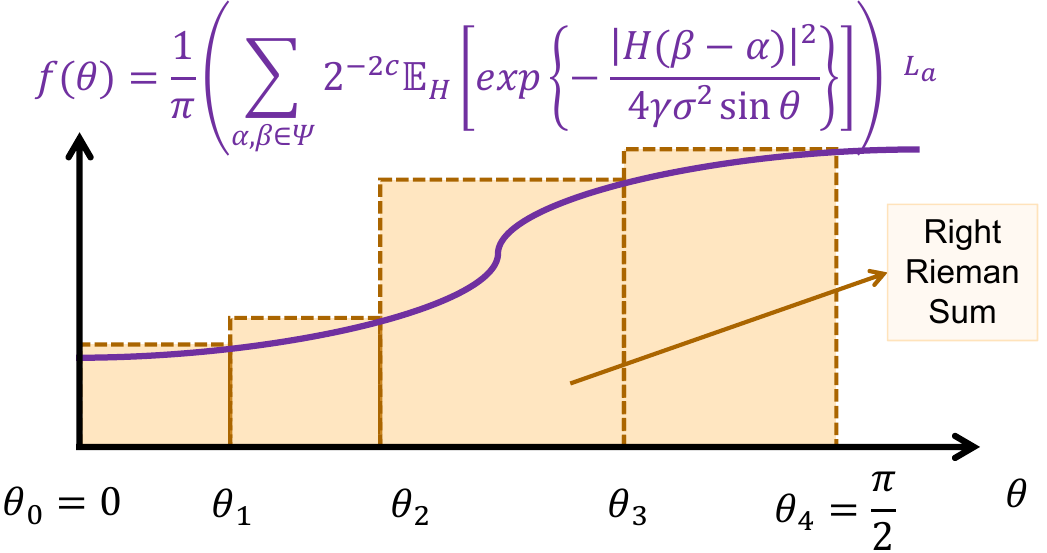}
	\caption{The rule of Right Rieman sum. }
	\label{fig:week4}
\end{figure}
\begin{itemize}
	\item {\textit{\ul{How to choose the parameters $N$ and $\{\theta_t\}_{t=0}^N$?}} The partitions $\{\theta_t\}_{t=0}^N$ implements a Riemann right sum of an integral $\int_0^{\frac{\pi}{2}}f(\theta)\mathrm{d}\theta$ that arises in the derivation. As shown in Fig. \ref{fig:week4}, the partitions $\{\theta_t\}_{t=0}^N$ can be arbitrarily chosen to obtain an upper bound of the integral. Riemann sums converge as the partition gets finer, and thus the parameter $N$ can adjust the accuracy of the bound, with the following asymptotic convergence:
	\begin{equation}
		\lim\limits_{N\to\infty}\sum_{t=1}^{N}f(\theta_t)(\theta_t-\theta_{t-1})=\int_0^{\frac{\pi}{2}}f(\theta)\mathrm{d}\theta.
	\end{equation}}
	\item \textit{\ul{How to calculate the expectation $\mathbb{E}_H[\cdot]$?}} The expectation in the bounds can be calculated in two main ways: ($i$) \textbf{Analytical Calculation}. For a specific fading channel, the paper derives closed-form expressions for the expectation in subsection \ref{casestudy} (Case Studies), which leads to explicit upper bounds. (See, e.g., Lemma \ref{theoremnaka} for Nakagami-$m$ fading channels); ($ii$) \textbf{Monte Carlo Simulation}. For more general channel models where closed-form expressions may not be available, Monte Carlo simulation can be used to estimate the expectation numerically.
\end{itemize}

The following lemma establishes a crucial property regarding the boundedness of the expectation term in our main result, ensuring that the derived upper bound is well-defined across all fading channel distributions. 
\begin{Lemma}
	The expectation term $\mathbb{E}_{H}\left[\exp\left\{\frac{-{|H(\beta-\alpha)|}^2}{4\gamma \sigma^2 \mathrm{sin}^2\theta_t}\right\}\right]$ is strictly bounded such that
	\begin{equation}
		0<\mathbb{E}_{H}\left[\exp\left\{\frac{-{|H(\beta-\alpha)|}^2}{4 \gamma\sigma^2 \mathrm{sin}^2\theta_t}\right\}\right]\le1.
	\end{equation}
\end{Lemma}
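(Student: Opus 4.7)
The plan is to establish both inequalities via elementary pointwise bounds on the integrand, followed by monotonicity of expectation. The key observation I would start from is that, for every realization of $H$ and every pair $\beta_i,\beta_j\in\Psi$, the numerator $|H(\beta_i-\beta_j)|^2$ is non-negative while the denominator $4\sigma^2\sin^2\theta_t$ is strictly positive: $\sigma^2>0$ is the AWGN variance, and $\theta_t\in(0,\pi/2]$ for $t\in[N]$ by the construction preceding \eqref{61} (the endpoint $\theta_0=0$ is excluded from the index set of the right Riemann sum), so $\sin^2\theta_t>0$.

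For the upper bound, I would note that the exponent $-\frac{|H(\beta_i-\beta_j)|^2}{4\sigma^2\sin^2\theta_t}$ is non-positive pointwise, hence the integrand is bounded above by $\exp(0)=1$ for every realization of $H$. Taking expectations preserves this inequality and yields the stated bound $\le 1$. Equality is attained, e.g., when $\beta_i=\beta_j$, which explains why the inequality cannot be made strict.

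For the strict lower bound, I would invoke the fact that $\exp(x)>0$ for every finite $x\in\mathbb{R}$, so the integrand is strictly positive on the event $\{|H|<\infty\}$. For every fading model considered in the paper (Rayleigh, Nakagami-$m$, Rician), $H$ is almost surely finite, so the integrand is strictly positive almost surely, forcing its expectation to be strictly positive. A clean way to make this quantitative is to pick any $M>0$ with $\Pr\{|H|\le M\}>0$; on this event the integrand is bounded below by $\exp\{-M^2|\beta_i-\beta_j|^2/(4\sigma^2\sin^2\theta_t)\}$, which is a strictly positive constant, so the expectation is at least this constant times $\Pr\{|H|\le M\}>0$.

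There is no significant obstacle here: the result is a pointwise consequence of the range of the exponential map together with the strict positivity of $\sigma^2$ and $\sin^2\theta_t$. The only technicality worth flagging is that the case $\theta_t=0$ must be excluded from the argument (otherwise the denominator vanishes), but the Riemann-sum construction already restricts $t\ge 1$ so this case never arises when the lemma is invoked in Theorem \ref{coretheorem}.
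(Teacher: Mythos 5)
Your proposal is correct and follows essentially the same route as the paper: the upper bound via the pointwise inequality $\exp\{-x\}\le 1$ for $x\ge 0$ combined with monotonicity of expectation, and positivity of the exponential for the lower bound. Your treatment of the strict lower bound (via a truncation event $\{|H|\le M\}$ of positive probability) is actually more explicit than the paper's, which simply asserts that the bound away from zero is ``straightforward to verify.''
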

\noindent \textit{Proof.}
	We establish both the lower and upper bounds separately: For the lower bound, note that the exponential term 
	 $\exp\left\{\frac{-{|\beta-\alpha|}^2}{4\gamma \sigma^2 \mathrm{sin}^2\theta_t}\right\}$ is strictly positive for any values of $H$. Since the expectation of a strictly positive random variable is itself strictly positive, we have $\mathbb{E}_{H}\left[\exp\left\{\frac{-{|H(\beta-\alpha)|}^2}{4 \gamma\sigma^2 \mathrm{sin}^2\theta_t}\right\}\right]>0$. For the upper bound, observe that $|H(\beta-\alpha)|^2\ge0$, we have $\exp\left\{\frac{-{|\beta-\alpha|}^2}{4\gamma \sigma^2 \mathrm{sin}^2\theta_t}\right\}\le1$. A fundamental property of expectations states that if a random variable is bounded above by some constant $c_{\max}$
	 , then its expectation is also bounded above by $c_{\max}$. Applying this principle yields the upper bound $\mathbb{E}_{H}\left[\exp\left\{\frac{-{|H(\beta-\alpha)|}^2}{4 \gamma\sigma^2 \mathrm{sin}^2\theta_t}\right\}\right]\le1$.
\finished

Having established the boundedness of the expectation term, we now present a significant property that helps simplify the computation of the upper bound. The following corollary demonstrates that the phase component of the complex fading coefficient does not affect the BLER performance, allowing us to focus exclusively on the magnitude distribution when evaluating the bound.
\begin{corollary}\label{coro1}
	The bound on BLER of Spinal codes correlates solely with the magnitude distribution of the fading {coefficient}.
\end{corollary}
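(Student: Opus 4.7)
The plan is to trace exactly where the fading coefficient $H$ enters the bound and to show that the dependence is purely through $|H|^2$. Inspecting both Theorem \ref{coretheorem1} and Theorem \ref{coretheorem}, the only occurrence of $H$ is inside the inner expectation
\[
\mathbb{E}_{H}\!\left[\exp\!\left\{-\frac{|H(\beta_i-\beta_j)|^2}{4\sigma^2\sin^2\theta_t}\right\}\right],
\]
so it suffices to prove that this expression depends on $H$ only through its modulus.

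The first step is the multiplicative property of the complex modulus: for any $H,\beta_i,\beta_j\in\mathbb{C}$,
\[
|H(\beta_i-\beta_j)|^2 \;=\; |H|^2\,|\beta_i-\beta_j|^2.
\]
Writing $H=|H|e^{j\varphi}$ with magnitude $R\triangleq|H|$ and phase $\varphi$, the argument of the exponential is thus a function of $R$ alone; the phase $\varphi$ does not appear. Consequently, the expectation reduces to a single integral against the marginal distribution of the magnitude,
\[
\mathbb{E}_{H}\!\left[\exp\!\left\{-\frac{|H(\beta_i-\beta_j)|^2}{4\sigma^2\sin^2\theta_t}\right\}\right]
=\mathbb{E}_{R}\!\left[\exp\!\left\{-\frac{R^{2}\,|\beta_i-\beta_j|^{2}}{4\sigma^2\sin^2\theta_t}\right\}\right],
\]
where the expectation on the right is taken solely with respect to the distribution $f_{R}(r)$ of $|H|$. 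Formally, if $H$ has joint density $f_{R,\Phi}(r,\varphi)$ on $[0,\infty)\times[0,2\pi)$, integrating out $\varphi$ yields $f_{R}(r)=\int_{0}^{2\pi}f_{R,\Phi}(r,\varphi)\,d\varphi$, and the identity above follows by Fubini's theorem since the integrand is non-negative and bounded.

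Finally, substituting this reduced expectation back into equations (\ref{epsilona11}) and (\ref{allinone}) shows that $\epsilon_a$ — and therefore the entire upper bound $P_e\le 1-\prod_a(1-\epsilon_a)$ — depends on the fading distribution only through the magnitude law $f_{R}(r)$, proving the corollary. Since the argument is a direct algebraic consequence of $|H(\beta_i-\beta_j)|=|H|\cdot|\beta_i-\beta_j|$, there is no real obstacle here; the only subtlety is to notice that this factorization propagates unchanged through every subsequent step of Theorems \ref{coretheorem1} and \ref{coretheorem}, so no further analytical work is needed.
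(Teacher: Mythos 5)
Your proposal is correct and follows essentially the same route as the paper: writing $H=Re^{j\alpha}$, invoking the multiplicativity of the modulus so that $|H(\beta_i-\beta_j)|^2=R^2|\beta_i-\beta_j|^2$, and then dropping the phase from the expectation, which is exactly the chain of equalities in the paper's equation (50). Your added remark about marginalizing the joint density via Fubini is a slightly more explicit justification of the final step but does not change the argument.
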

\noindent\textit{Proof.}
	Let us decompose the complex fading coefficient $H$ into its polar form $H=Re^{j\alpha}$. The expectation term can then be written as:
	\begin{equation}\label{50}
		\begin{aligned}
				&\mathbb{E}_{H}\left[\exp\left\{\frac{-{|H(\beta-\alpha)|}^2}{4 \gamma\sigma^2 \mathrm{sin}^2\theta_t}\right\}\right]\\&=\mathbb{E}_{R,\alpha}\left[\exp\left\{\frac{-{|Re^{j\alpha}(\beta-\alpha)|}^2}{4 \gamma\sigma^2 \mathrm{sin}^2\theta_t}\right\}\right]\\
				&=\mathbb{E}_{R}\left[\exp\left\{\frac{-{R^2|\beta-\alpha|}^2}{4 \gamma\sigma^2 \mathrm{sin}^2\theta_t}\right\}\right].
			\end{aligned}
	\end{equation}
\finished
\subsection{The Refined Upper Bound is Tighter}\label{realfadingb}
	In this subsection, we theoretically demonstrate that the refined upper bound proposed in {Theorem \ref{coretheorem1}} demonstrates greater tightness than the upper bound based on Gallager's results in {Theorem \ref{gallagerbound1}}.
	\begin{theorem}\label{compare}
		Under identical parameter setting of Spinal codes, fading channels, and SNR, {Theorem \ref{coretheorem}} yields a tighter upper bound compared to {Theorem \ref{coretheorem1}}. Specifically:
		\begin{equation}\label{12}
			P_e^{\text{Spinal}}<P_e^{\text{Gallager}}.
		\end{equation}
	\end{theorem}
	\textit{Proof Sketch.} 
	To begin with, it is easy to verify that
	\begin{equation}
		\frac{\mathrm{d}P_e^{\text{Spinal}}}{\mathrm{d}\epsilon_a^{\text{Spinal}}}\ge0,\frac{\mathrm{d}P_e^{\text{Gallager}}}{\mathrm{d}\epsilon_a^{\text{Gallager}}}\ge0, \forall a\in[\frac{n}{k}].
	\end{equation}
	 Thus, it is sufficient to prove  $\epsilon_a^{\text{Spinal}}<\epsilon_a^{\text{Gallager}},\forall a\in[\frac{n}{k}]$ to obtain \eqref{12}. Our remaining focus is to show that this inequality is true, which is accomplished by a \textit{two-step approach}:
	 
	  ($i$) We introduce an intermediate bound $\epsilon_a^{\text{Mid}}$
	  that serves as an upper bound for $\epsilon_a^{\text{Spinal}}$, with
	  \begin{equation}\label{proof:theobound1}
	  	\epsilon_a^{\text{Spinal}}\le\epsilon_a^{\text{Mid}}, \forall a\in[\frac{n}{k}].
	  \end{equation} Here, $\epsilon_a^{\text{Mid}}$ is given as:
	  \begin{equation}\label{eq:tighter}
	  	\begin{aligned}
	  		\epsilon_a^{\text{Mid}}=&\left(2^k-1\right)2^{n-ak} \cdot  \\&\frac{1}{2} \left(\sum_{\beta,\alpha \in \Psi}  2^{-2c}\mathbb{E}_{H}\left[\exp\left\{\frac{-{|H(\beta-\alpha)|}^2}{4 \sigma^2}\right\}\right]\right)^{L_a},
	  	\end{aligned}
	  \end{equation}
	  and the detailed proof of \eqref{proof:theobound1} is given in Appendix \ref{appendixd}.
	  
	  ($ii$) We next show that $\epsilon_a^{\text{Mid}}$ is upper bounded by $\epsilon_a^{\text{Gallager}}$:
	  \begin{equation}\label{proof:theobound2}
	  		\epsilon_a^{\text{Mid}}\le\epsilon_a^{\text{Gallager}}, \forall a\in[\frac{n}{k}].
	  \end{equation}
	  This is due to the fact that $(2^k-1)/2<2^k$ holds for $k>0$, which yields the following inequality:
\begin{equation}
	\begin{split}
		\epsilon_a^{\text{Mid}} &< 2^{n-ak+k} \cdot \biggl(\sum_{\beta,\alpha\in \Psi} 2^{-2c} \\
		&\quad\quad\quad \times \mathbb{E}_{H}\Bigl[\exp\Bigl\{-\frac{|H(\beta-\alpha)|^2}{4\sigma^2}\Bigr\}\Bigr]\biggr)^{L_a} = \epsilon_a^{\text{Gallager}}.
	\end{split}
\end{equation}
Combing \eqref{proof:theobound1} with \eqref{proof:theobound2} accomplishes the proof that $\epsilon_a^{\text{Spinal}}\le\epsilon_a^{\text{Mid}}<\epsilon_a^{\text{Gallager}}$, which leads to the inequality $P_e^{\text{Spinal}}<P_e^{\text{Gallager}}$.
\subsection{Proof of Theorem \ref{coretheorem}}\label{prooftheo4}
We prove this theorem by examining two cases:
\begin{itemize}
	\item Case I: Complex fading channels where $\gamma=1$.
	\item Case II: Real fading channels where $\gamma=2$.
\end{itemize}
\subsubsection{Case I: Complex Fading Channels}
	Suppose a message $ \mathbf{M}^*=\left(\mathbf{m}_1^*,\mathbf{m}_2^*,\cdots,\mathbf{m}_{n/k}^*\right)\in {\{0,1\}}^n $ is encoded to $\operatorname{f} \left( \mathbf{x}_{i,j}(\mathbf{M}^*) \right)\in\mathbb{C}$ to be transmitted over a flat fast complex fading channel. At the receiver, the ML rule given perfect CSI is	
	\begin{equation} \label{eq7}
		\begin{split}
			\widehat{\mathbf{M}} \in \underset{\mathbf{{M}} \in {\{0,1\}}^n}{\mathrm{arg\,min}}\mathscr{D}({\mathbf{M}}) ,
		\end{split}
	\end{equation}
	where $\mathscr{D}(\cdot)\triangleq\sum_{i\in[n/k]} \sum_{j\in [\ell_i]} {|y_{i,j}-h_{i,j}\operatorname{f}(\mathbf{x}_{i,j}({\cdot}))|}^2$ is the decoding cost function, and $\widehat{\mathbf{M}}=\left(\widehat{\mathbf{m}_1},\widehat{\mathbf{m}_2},\cdots,\widehat{\mathbf{m}_{n/k}}\right)\in {\{0,1\}}^n$ is the decoding result. The ML decoder aims at selecting the one with the lowest decoding cost from the candidate sequence space ${\{0,1\}}^n$. If $\widehat{\mathbf{M}}=\mathbf{M}^*$, the decoding result is correct; otherwise, a decoding error occurs.
	
The candidate sequence space ${\{0,1\}}^n$ can be partitioned into two disjoint sets: the correct decoding sequence $\mathbf{M}^*$, and the set of incorrect decoding sequences denoted as $\mathbf{M}' = (\mathbf{m}'_1,\mathbf{m}'_2,\cdots,\mathbf{m}'_{n/k}) \in \mathcal{W}$, with $\mathcal{W} \triangleq \{ (\mathbf{m}'_1,\mathbf{m}'_2,\cdots,\mathbf{m}'_{n/k}) : \exists 1 \leq i \leq n/k, \mathbf{m}'_i \neq \mathbf{m}^*_i \}$. Given $\mathbf{M}^*$ transmitted, the received signal is $y_{i,j}=h_{i,j}\operatorname{f}(\mathbf{x}_{i,j}(\mathbf{M^*}))+n_{i,j}$. The decoding cost for $\mathbf{M}^*$ is

	\begin{align} \label{eqcost1}
		\begin{split}
			\mathscr{D}(\mathbf{M}^*) &\overset{\triangle}{=} \sum_{i=1}^{n/k}\sum_{j=1}^{\ell_i}{|y_{i,j}-h_{i,j}\operatorname{f}(\mathbf{x}_{i,j}(\mathbf{M}^*))|}^2 = \sum_{i=1}^{n/k}\sum_{j=1}^{\ell_i} |n_{i,j}|^2 .
		\end{split}
	\end{align}
	While the decoding cost for an incorrect sequence $\mathbf{M}'$ is
	\begin{align} \label{eqcost2}
		\mathscr{D}(\mathbf{M}') \overset{\triangle}{=} \sum_{i=1}^{n/k}\sum_{j=1}^{\ell_i}{|y_{i,j}-h_{i,j}\operatorname{f}(\mathbf{x}_{i,j}(\mathbf{M}'))|}^2.
	\end{align}
	Let $\mathcal{E}_a$ be the event that there exists an error in the $a^{th}$ segment, \emph{i.e.}, $\widehat{\mathbf{m}_a}\ne\mathbf{m}^*_a$. Denote $\overline{\mathcal{E}}_a$ as the complement of $\mathcal{E}_a$. The BLER of Spinal codes is expressed as:
	\begin{equation} \label{eq10}
		\resizebox{1\hsize}{!}{$
			\begin{aligned}
				P_e \triangleq \Pr\left(\widehat{\mathbf{M}}\ne\mathbf{M}^*\right)&= \mathrm{Pr}\left( \bigcup_{a=1}^{n/k}\mathcal{E}_a \right) = 1 - \mathrm{Pr}\left(\bigcap_{a=1}^{n/k} \overline{\mathcal{E}}_a \right)\\
				&=1 - \prod_{a=1}^{n/k} \left[ 1 - \mathrm{Pr} \left( \mathcal{E}_a\bigg|\bigcap_{i=1}^{a-1}\overline{\mathcal{E}}_i \right)\right].
			\end{aligned}$}
	\end{equation}
	The next step is to calculate the conditional probability 
	$\mathrm{Pr} \big( \mathcal{E}_a\big|\bigcap_{i=1}^{a-1}\overline{\mathcal{E}}_i \big)$. We define $\mathcal{W}_a \triangleq \{\left(\mathbf{m}'_1, \cdots ,\mathbf{m}'_a\right)\text{:} \mathbf{m}'_1=\mathbf{m}_1^*,\cdots,\mathbf{m}'_{a-1}=\mathbf{m}_{a-1}^*,\mathbf{m}'_a \neq \mathbf{m}_a^*\} \subseteq \mathcal{W}$, capturing sequences matching the correct sequence in the first $a-1$ segments but differing in the $a$-th. The conditional probability thus reflects the chance of any sequence in $\mathcal{W}_a$ having a lower decoding cost than $\mathbf{M}^*$:
	\begin{equation} \label{eq11}
		\mathrm{Pr} \left( \mathcal{E}_a\bigg|\bigcap_{i=1}^{a-1}\overline{\mathcal{E}}_i \right) = \mathrm{Pr} \left( \exists \mathbf{M}'\in \mathcal{W}_a : \mathscr{D}(\mathbf{M}') \leq \mathscr{D}(\mathbf{M}^*) \right) \text{.}
	\end{equation}
	Applying the union bound of probability yields
	\begin{equation}\label{eq12}
		\begin{aligned}
			\mathrm{Pr} \left( \mathcal{E}_a\bigg|\bigcap_{i=1}^{a-1}\overline{\mathcal{E}}_i \right) 
			\le \min\left\{1,\sum_{\mathbf{M}' \in \mathcal{W}_a} \mathrm{Pr} \left( \mathscr{D}(\mathbf{M}') \leq \mathscr{D}(\mathbf{M}^*) \right)\right\} \text{.}
		\end{aligned}
	\end{equation}
	Our next focus of the proof is to analyze the probability $\mathrm{Pr} \left( \mathscr{D}(\mathbf{M}') \leq \mathscr{D}(\mathbf{M}^*) \right)$ in \eqref{eq12}. Through a sequence of transformations, we derive the following result:
		\begin{align}  
				&\mathrm{Pr} \left( \mathscr{D}(\mathbf{M}') \leq \mathscr{D}(\mathbf{M}^*) \right)\notag\\ &\overset{(a)}{=}\mathrm{Pr}\bigg( \sum_{i=1}^{n/k}\sum_{j=1}^{\ell_i}{|y_{i,j}-h_{i,j}f(\mathbf{x}_{i,j}(\mathbf{M'}))|}^2 \leq \sum_{i=1}^{n/k}\sum_{j=1}^{\ell_i} |n_{i,j}|^2 \bigg) \notag\\
				& \overset{(b)}{=} \mathrm{Pr}\bigg( \sum_{i=a}^{n/k}\sum_{j=1}^{\ell_i}{|y_{i,j}-h_{i,j}f(\mathbf{x}_{i,j}(\mathbf{M'}))|}^2 \leq \sum_{i=a}^{n/k}\sum_{j=1}^{\ell_i} |n_{i,j}|^2 \bigg) \notag\\
				&	\stackrel{(c)}{=}	\mathrm{Pr}\bigg(\sum_{i=a}^{n/k}\sum_{j=1}^{\ell_i}{\bigg|V_{i,j} + n_{i,j}\bigg|}^2 \leq \sum_{i=a}^{n/k}\sum_{j=1}^{\ell_i} |n_{i,j}|^2 \bigg)\notag\\
				&\stackrel{(d)}{=}\mathrm{Pr} \left( \sum_{i=a}^{n/k}\sum_{j=1}^{\ell_i} |V_{i,j}|^2 + \sum_{i=a}^{n/k}\sum_{j=1}^{\ell_i} \left(V_{i,j}{n}_{i,j}^*+{V}_{i,j}^*n_{i,j}\right) \leq 0 \right)\notag\\
				&\stackrel{(e)}{=}	\mathrm{Pr} \left( \mathfrak{R}\left[\mathbf{V}^{L_a} {\left(\mathbf{V}^{L_a} + 2{\mathbf{N}}^{L_a}\right)}^{\mathrm{H}}\right] \leq 0 \right)\notag\\
				&\stackrel{(f)}{=}  \int_{\mathbb{C}^{L_a}} \mathrm{Pr} \biggl( \mathfrak{R}\bigl[\mathbf{v}^{L_a} 
				{\left(\mathbf{v}^{L_a} + 2{\mathbf{N}}^{L_a}\right)}^{\mathrm{H}}\bigr] \leq 0 \biggr)\notag \\
				&\quad\quad\quad\quad\quad\quad\quad\quad\quad\quad\quad\quad\quad\times \mathrm{Pr} \left( \mathbf{V}^{L_a}=\mathbf{v}^{L_a} \right) 
				{~\mathrm{d}\mathbf{v}^{L_a}}\notag\\
				& \stackrel{(g)}{=}\int_{\mathbb{C}^{L_a}}Q\left( \frac{\left\|\mathbf{v}^{L_a}\right\|_2}{\sqrt{2}\sigma} \right)\cdot\mathrm{Pr} \left( \mathbf{V}^{L_a}=\mathbf{v}^{L_a} \right) 
				{~\mathrm{d}\mathbf{v}^{L_a}}\notag\\&
				\stackrel{(h)}{=}\frac{\int_{0}^{\frac{\pi}{2}}\int_{\mathbb{C}^{L_a}}  \exp\left\{\frac{-{\left\|\mathbf{v}^{L_a}\right\|_2}^2}{4 \sigma^2 \mathrm{sin}^2\theta}\right\}\cdot\mathrm{Pr} \left( \mathbf{V}^{L_a}=\mathbf{v}^{L_a} \right) ~\mathrm{d}\mathbf{v}^{L_a}  \mathrm{d}\theta}{\pi}\notag\\
				&\stackrel{(i)}{=}\frac{\int_{0}^{\frac{\pi}{2}}\prod_{i=a}^{n/k}\prod_{j=1}^{\ell_i} \int_{\mathbb{C}} \exp \left\{\frac{-|v_{i,j}|^2}{4 \sigma^2 \mathrm{sin}^2\theta}\right\} f_{V_{i,j}}(v_{i,j})~\mathrm{d}v_{i,j}\mathrm{d}\theta}{\pi}\notag\\&\stackrel{(j)}{=}\frac{\int_{0}^{\frac{\pi}{2}}{ \left( \int_{\mathbb{C}} {\exp}{\left\{-\frac{|v_{a,1}|^2}{{4}{\sigma^2}{\sin^2}\theta}\right\}} f_{V_{a,1}}(v_{a,1}) ~\mathrm{d}v_{a,1} \right) }^{L_a}\mathrm{d}\theta}{\pi}\notag\\&\stackrel{(k)}{=}\frac{\int_{0}^{\frac{\pi}{2}}\left(\mathbb{E}_{V_{a,1}}\left[{\mathrm{exp}}{\left\{-\frac{|V_{a,1}|^2}{{4}{\sigma^2}{\sin^2}\theta}\right\}}\right]\right)^{L_a}\mathrm{d}\theta}{\pi}\notag\\&\stackrel{(l)}{=}\frac{\int_{0}^{\frac{\pi}{2}}\left(\sum_{\beta,\alpha \in \Psi} \mathcal{Q}(\alpha)\mathcal{Q}(\beta){\mathbb{E}}_H\left[{\mathrm{exp}}{\left\{-\frac{|H(\beta-\alpha)|^2}{{4}{\sigma^2}{\sin^2}\theta}\right\}}\right]\right)^{L_a}\mathrm{d}\theta}{\pi}\notag\\&\stackrel{(m)}{=}\frac{\sum_{t=1}^N\int_{\theta_{t-1}}^{\theta_t} \left(\sum_{\beta,\alpha \in \Psi}2^{-2c}{\mathbb{E}}_H\left[{\mathrm{exp}}{\left\{-\frac{H(\beta-\alpha)}{{4}{\sigma^2}{\sin^2}\theta}\right\}}\right]\right)^{L_a} \mathrm{d}\theta}{\pi}\notag\\&\stackrel{(n)}{\le}\sum_{t\in[N]}b_t\left(\sum_{\beta,\alpha \in \Psi}2^{-2c}{\mathbb{E}}_H\left[{\mathrm{exp}}{\left\{-\frac{H(\beta-\alpha)}{{4}{\sigma^2}{\sin^2}\theta_t}\right\}}\right]\right)^{L_a}\label{eq4-7}\notag\\&\stackrel{(o)}{=} \mathscr{F} \left(L_a , \sigma ,1\right).
			\end{align}	
	\noindent where \eqref{eq4-7}-($a$) establishes by
	leveraging (\ref{eqcost1}) and (\ref{eqcost2}); \eqref{eq4-7}-($b$)
	 establishes since $\operatorname{f}(\mathbf{x}_{i,j}(\mathbf{M}^*))=\operatorname{f}(\mathbf{x}_{i,j}(\mathbf{M'}))$ for $1\leq i<a$. This holds because when $1\leq i<a$, $\mathbf{M}'$ and $\mathbf{M}^*$ are identical in the first $a-1$ segments, resulting in identical hash outputs (See 
	Appendix \ref{property} for the detailed proof); \eqref{eq4-7}-($c$) is obtained by applying the equality $y_{i,j}=h_{i,j}\operatorname{f}(\mathbf{x}_{i,j}(\mathbf{M^*}))+n_{i,j}$ and introducing the variable substitution
	\begin{equation}\label{eq27v2}
		V_{i,j}={h_{i,j}(\operatorname{f}(\mathbf{x}_{i,j}(\mathbf{M}^*))-\operatorname{f}(\mathbf{x}_{i,j}(\mathbf{M'})))};
	\end{equation} \eqref{eq4-7}-($d$) is derived by decomposing the square $|V_{i,j}+n_{i,j}|^2$:
\begin{equation}
	|V_{i,j}+n_{i,j}|^2=|V_{i,j}|^2+V_{i,j}{n}_{i,j}^*+{V}_{i,j}^*n_{i,j}+|n_{i,j}|^2;
\end{equation}
\eqref{eq4-7}-($e$) is established by introducing the row vector $\mathbf{V}^{L_a}\in\mathbb{C}^{L_a}$ and $\mathbf{N}^{L_a}\in\mathbb{C}^{L_a}$, which is composed of the complex random variables $\{V_{i,j}\}_{i\in[n/k],j\in[\ell_i]}$ and $\{n_{i,j}\}_{i\in[n/k],j\in[\ell_i]}$, respectively. The equality follows from the properties of inner products in complex vector spaces:
\begin{subequations}
	\begin{align}
		\mathfrak{R}\left[\mathbf{V}^{L_a}\left(\mathbf{V}^{L_a}\right)^{\mathrm{H}}\right]&=\sum_{i=1}^{n/k}\sum_{j=1}^{L_a} \mathfrak{R}\left[V_{i,j}V_{i,j}^*\right]=\sum_{i=1}^{n/k}\sum_{j=1}^{L_a} |V_{i,j}|^2,\\
			\mathfrak{R}\left[\mathbf{V}^{L_a}\left(2\mathbf{N}^{L_a}\right)^{\mathrm{H}}\right]&=\sum_{i=1}^{n/k}\sum_{j=1}^{L_a} 2\mathfrak{R}\left[V_{i,j}n_{i,j}^*\right]\notag\\&=\sum_{i=1}^{n/k}\sum_{j=1}^{L_a}\left(V_{i,j}{n}_{i,j}^*+{V}_{i,j}^*n_{i,j}\right).
	\end{align}
\end{subequations}
{\eqref{eq4-7}-($f$) is derived because of the independence between $\mathbf{V}^{L_a}$ and $\mathbf{N}^{L_a}$; \eqref{eq4-7}-($g$) is established by the following lemma, where the proof is given in Appendix \ref{app:prooflemma1}.
	\begin{Lemma} \label{Lemma3}
		Given that $n_{i,j}$ is i.i.d complex AWGN with variance $\sigma^2$, {i.e.}, $n_{i,j}\sim\mathcal{C}\mathcal{N}(0,\sigma^2)$, the following equality holds true:
		\begin{equation} \label{eq4-24}
			\mathrm{Pr} \left( \mathfrak{R}\left[\mathbf{v}^{L_a} {\left(\mathbf{v}^{L_a} + 2{\mathbf{N}}^{L_a}\right)}^{\mathrm{H}}\right] \leq 0  \right)=Q\left( \frac{\left\|\mathbf{v}^{L_a}\right\|_2}{\sqrt{2}\sigma} \right),
		\end{equation}
		where $Q(x)\triangleq\frac{1}{\sqrt{2 \pi}} \int_x^{\infty} e^{-\frac{x^2}{2}}dx$ denotes the $Q$ function.
	\end{Lemma}	
	\noindent\eqref{eq4-7}-($h$) is established by a transformation of the $Q$ function, referred to as \textit{Craig's formula} \cite{craig1991new}:
	\begin{equation} \label{eq33}
		Q(x) = \frac{1}{\pi} \int_{0}^{\frac{\pi}{2}} \mathrm{exp} \left(\frac{-x^2}{2\mathrm{sin}^2\theta}\right) \mathrm{d}\theta,
	\end{equation}
	This transformation repositions the variables of the Q function from the integral's lower limits to the integrand, thereby simplifying the analysis; \eqref{eq4-7}-($i$) is established by adopting the i.i.d of $V_{i,j}$ (see Appendix \ref{iidv} for the proof):
	\begin{equation}\label{34}
		\Pr\left(\mathbf{V}^{L_a}=\mathbf{v}^{L_a}\right)=\prod_{i=a}^{n/k}\prod_{j=1}^{\ell_i}f_{V_{i,j}}\left(v_{i,j}\right),
	\end{equation}
	and leveraging the definition of $\ell$-2 norm, which yields:
	\begin{equation}
		\exp \left\{\frac{-\left\|\mathbf{v}^{L_a}\right\|_2^2}{4 \sigma^2 \mathrm{sin}^2\theta}\right\}=\prod_{i=a}^{n/k}\prod_{j=1}^{\ell_i}\exp \left\{\frac{-v_{i,j}^2}{4 \sigma^2 \mathrm{sin}^2\theta}\right\};
	\end{equation}
	 \eqref{eq4-7}-($j$) is derived by the i.i.d nature of $V_{i,j}$; \eqref{eq4-7}-($k$) is obtained from the definition of the expectation; \eqref{eq4-7}-($l$) holds by expanding the expectation term $\mathbb{E}_{V_{a,1}}\left[{\mathrm{exp}}{\left\{-\frac{|V_{a,1}|^2}{{4}{\sigma^2}{\sin^2}\theta}\right\}}\right]$ as in \eqref{expanding} at the top of the next page,
	 \begin{figure*}
	 	\begin{equation}\label{expanding}
	 		\begin{aligned}
	 			\mathbb{E}_{V_{a,1}}\left[{\mathrm{exp}}{\left\{-\frac{|V_{a,1}|^2}{{4}{\sigma^2}{\sin^2}\theta}\right\}}\right]&\stackrel{(a)}{=}\underset{\substack{h_{a,1}, \operatorname{f}(\mathbf{x}_{a,1}(\mathbf{M'})),  \\ \operatorname{f}(\mathbf{x}_{a,1}(\mathbf{M^*}))}}{\mathbb{E}}\Big[{\mathrm{exp}}{\left\{\frac{-|h_{a,1}(\operatorname{f}(\mathbf{x}_{a,1}(\mathbf{M}^*))-\operatorname{f}(\mathbf{x}_{a,1}(\mathbf{M'})))|^2}{{4}{\sigma^2}{\sin^2}\theta}\right\}}\Big]\\&\stackrel{(b)}{=}\sum_{\beta,\alpha \in \Psi} \mathcal{Q}(\alpha)\mathcal{Q}(\beta){\mathbb{E}}_H\left[{\mathrm{exp}}{\left\{-\frac{|H(\beta-\alpha)|^2}{{4}{\sigma^2}{\sin^2}\theta}\right\}}\right]
	 		\end{aligned}
	 	\end{equation}
	 	\hrulefill
	 \end{figure*} where \eqref{expanding}-($a$) follows directly from the definition of $V_{i,j}$ as expressed \eqref{eq27v2}. Specifically, for the case where $i=a$ and $j=1$, we have \begin{equation}
	 	V_{a,1}={h_{a,1}(\operatorname{f}(\mathbf{x}_{a,1}(\mathbf{M}^*))-\operatorname{f}(\mathbf{x}_{a,1}(\mathbf{M'})))},
	 \end{equation}
	 and \eqref{eq4-7}-($m$) leverages the fundamental property that the fading coefficient $h_{a,1}$, the encoded symbols $f(\mathbf{x}_{a,1}(\mathbf{M}'))$, and $f(\mathbf{x}_{a,1}(\mathbf{M}^*))$ are mutually independent (see Appendix \ref{iidv} for the proof) and the uniform distribution of codeword $\mathcal{Q}(x)=2^{-c},\forall x\in\Psi$; 
	 \eqref{eq4-7}-($m$) employs a standard numerical integration technique by partitioning the integration interval $[0,\frac{\pi}{2}]$ into a collection of $N$ sub-intervals $([\theta_{t-1},\theta_t]$ for $t\in[N]$, where $\theta_0=0$ and $\theta_N=\frac{\pi}{2}$; 
	 The upper bound \eqref{eq4-7}-($n$) applies the \textit{rule of Rieman right sum}. This upper bounding technique is  effective when the integrand is monotonically increasing with respect to the integration variable. The following lemma establishes this monotonicity property.
	\begin{Lemma}\label{lemma8}
		$\left(\sum_{\beta,\alpha \in \Psi}2^{-2c}{\mathbb{E}}_H\left[{\mathrm{exp}}{\left\{-\frac{H(\beta-\alpha)}{{4}{\sigma^2}{\sin^2}\theta}\right\}}\right]\right)^{L_a}$ is monotonically increasing with $\theta$.
	\end{Lemma}
	\noindent This monotonicity property ensures that evaluating the integrand at the right endpoint of each sub-interval yields a value greater than or equal to its value at any other point within that sub-interval;
	\noindent Finally, \eqref{expanding}-($o$) completes the derivation by recognizing that the resulting expression corresponds precisely to $\mathscr{F}(L_a,\sigma,1)$ as defined in \eqref{allinone}. With \eqref{eq4-7} in hand, we can then substitute it into \eqref{eq12} to establish the following bound:
	\begin{equation} \label{newww}
		\begin{split}
			\sum_{\mathbf{M}' \in \mathcal{W}_a} \mathrm{Pr} \left( \mathscr{D}\left(\mathbf{M}'\right) \le  \mathscr{D}\left(\mathbf{M}\right) \right) &\le \sum_{\mathbf{M}' \in \mathcal{W}_a} \mathscr{F}(L_a,\sigma,1)\\&={\left| \mathcal{W}_a \right|}\cdot\mathscr{F}(L_a,\sigma,1),
		\end{split}
	\end{equation}
	where $\left| \mathcal{W}_a \right| = (2^k-1)2^{n-ak}$. Substituting (\ref{newww}) into (\ref{eq12}) accomplishes the proof.}\finished
	{\subsubsection{Case II: Real Fading Channels}
	Over the real fading channel with AWGN variance $\sigma^2$, the bound in 
	\eqref{eq4-7} turns to
	\begin{align}  
		&\mathrm{Pr} \left( \mathscr{D}(\mathbf{M}') \leq \mathscr{D}(\mathbf{M}^*) \right)\notag\\ 
		&\stackrel{(a)}{=}  \int_{\mathbb{R}^{L_a}} \mathrm{Pr} \biggl( \mathbf{v}^{L_a} 
		{\left(\mathbf{v}^{L_a} + 2{\mathbf{N}}^{L_a}\right)}^{\mathrm{H}} \leq 0 \biggr)\notag \\
		&\quad\quad\quad\quad\quad\quad\quad\quad\quad\quad\quad\quad\quad\times \mathrm{Pr} \left( \mathbf{V}^{L_a}=\mathbf{v}^{L_a} \right) 
		{~\mathrm{d}\mathbf{v}^{L_a}}\notag\\
		& \stackrel{(b)}{=}\int_{\mathbb{R}^{L_a}}Q\left( \frac{\left\|\mathbf{v}^{L_a}\right\|_2}{{2}\sigma} \right)\cdot\mathrm{Pr} \left( \mathbf{V}^{L_a}=\mathbf{v}^{L_a} \right) 
		{~\mathrm{d}\mathbf{v}^{L_a}}\notag\\&\stackrel{(c)}{\le}\sum_{t\in[N]}b_t\left(\sum_{\beta,\alpha \in \Psi}2^{-2c}{\mathbb{E}}_H\left[{\mathrm{exp}}{\left\{-\frac{H(\beta-\alpha)}{{8}{\sigma^2}{\sin^2}\theta_t}\right\}}\right]\right)^{L_a}\label{eq4-9}\notag\\&\stackrel{(d)}{=} \mathscr{F} \left(L_a , \sigma,2 \right),
	\end{align}	
	where \eqref{eq4-9}-($a$) follows a similar process of \eqref{eq4-7}-($a$-$f$); \eqref{eq4-9}-($b$) is established by the following lemma:
	\begin{Lemma}
		(Restatement of \cite[\textit{Lemma 1}]{chen2023tight}) Given that $n_{i,j}$ is i.i.d AWGN with variance $\sigma^2$, {i.e.}, $n_{i,j}\sim\mathcal{N}(0,\sigma^2)$, the following equality holds true:
		\begin{equation} 
			\mathrm{Pr} \left( \mathbf{v}^{L_a} {\left(\mathbf{v}^{L_a} + 2{\mathbf{N}}^{L_a}\right)}^{\mathrm{H}} \leq 0  \right)=Q\left( \frac{\left\|\mathbf{v}^{L_a}\right\|_2}{{2}\sigma} \right).
		\end{equation}
	\end{Lemma}
	\noindent\eqref{eq4-9}-($c$) follows a similar process of \eqref{eq4-7}-($h$-$n$), and \eqref{eq4-9}-($d$) follows directly from the definition of $\mathscr{F}(L_a,\sigma,\gamma)$ given in \eqref{allinone}.
	Substituting \eqref{eq4-9} into (\ref{eq12}) accomplishes the proof.} \finished 
\section{Case Studies and Explicit Bounds}\label{casestudy}
A key advantage of our analytical framework is the ability to derive explicit, computationally efficient expressions for the expectation term  $\mathbb{E}_{R}\left[\exp\left\{\frac{-{|R(\beta-\alpha)|}^2}{4\gamma \sigma^2 \mathrm{sin}^2\theta_t}\right\}\right]$ to determine the BLER upper bound on Spinal codes. We establish a unified approach that addresses multiple fading channel models within a coherent mathematical framework.

For three widely-used fading channels, Nakagami-m, Rician, and Rayleigh (a special case of Nakagami-m where $m=1$ and Rician where $K=0$) fading channels, we developed closed-form expressions that eliminate the need for numerical integration or Monte Carlo methods when evaluating the BLER bound. In the following subsections, we present these explicit expressions for each fading channel.
{\subsection{Explicit Bounds for Nakagami-m Fading Channels}
The following lemma provides a closed-form solution for the expectation term $\mathbb{E}_{R}\left[\exp\left\{\frac{-{R^2|\beta-\alpha|}^2}{4\gamma \sigma^2 \mathrm{sin}^2\theta_t}\right\}\right]$ under Nakagami-m fading, where the proof is detailed in Appendix \ref{casestudy1}.
\begin{Lemma} \label{theoremnaka}
	Under the Nakagami-m fading channel with mean square $\Omega$, AWGN variance $\sigma^2$, and Nakagami parameter $m$, the expectation $\mathbb{E}_{R}\left[\exp\left\{\frac{-{R^2|\beta-\alpha|}^2}{4 \gamma\sigma^2 \mathrm{sin}^2\theta_t}\right\}\right]$ is closed-form:
	\begin{equation}\label{naka}
	G_{\text{Naka}}(m)={\left(\frac{4\gamma m{\sigma}^2{\sin}^2{\theta_t}}{{\Omega}|\beta-\alpha|^2+4\gamma m{\sigma}^2{\sin}^2{\theta_t}}\right)}^m.
	\end{equation}
\end{Lemma}
With {Lemma \ref{theoremnaka}} in hand, we can apply {Theorem \ref{coretheorem1}} and {Corollary \ref{coro1}} to upper bound the BLER of Spinal codes. The following corollary establishes key properties of $G_{\text{Naka}}(m)$, which reveals important connections between different fading models and provides valuable analytical insights.}
\begin{corollary}\label{coro2}
	{The following assertions hold true:}
	
	{($i$) $G_{\text{Naka}}(m)$ is monotonically decreasing with $m$;}
	
	{($ii$) For Rayleigh fading channels, we can calculate the expectation term by $G_{\text{Naka}}(1)$ to determine the upper bound;}
	
	{($iii$) For AWGN channels, we can calculate the limit $\lim\limits_{m\rightarrow\infty}G_{\text{Naka}}(m)$ to determine the upper bound, where}
	\begin{equation}\label{gausian}
			\begin{aligned}
				\lim\limits_{m\rightarrow\infty}G_{\text{Naka}}(m)=\exp\left\{-\frac{\Omega|\beta-\alpha|^2}{4\gamma\sigma^2\sin^2\theta_t}\right\}.
			\end{aligned}
	\end{equation} 
\end{corollary}
\noindent {\textit{Proof.} See Appendix \ref{proof:coro1}.} \finished
\subsection{{Explicit Bounds for Rician Fading Channels}}
{The following lemma provides a closed-form solution for the expectation term $\mathbb{E}_{R}\left[\exp\left\{\frac{-{R^2|\beta-\alpha|}^2}{4\gamma \sigma^2 \mathrm{sin}^2\theta_t}\right\}\right]$ under Rician fading, where the proof is detailed in
Appendix \ref{casestudy2}.
\begin{Lemma}
\label{theoremricecom}
Under the Rician fading channel with mean square $\Omega$, AWGN variance $\sigma^2$, and Rician factor $K$, the expectation $\mathbb{E}_{R}\left[\exp\left\{\frac{-{R^2|\beta-\alpha|}^2}{4 \gamma\sigma^2 \mathrm{sin}^2\theta_t}\right\}\right]$ is closed-form:
\begin{equation}\label{rician0}
\begin{aligned}
&G_{\text{Rician}}(K)=\frac{4\gamma(K+1){\sigma}^2{\sin}^2{\theta}_t}{{\Omega}|\beta-\alpha|^2+4\gamma(K+1){\sigma}^2{\sin}^2{\theta}_t}\times\\\qquad  &\quad\quad\quad{\mathrm{exp} \left\{ \frac{-K\Omega|\beta-\alpha|^2}{{\Omega}|\beta-\alpha|^2+4\gamma(K+1){\sigma}^2{\sin}^2{\theta}_t}\right\}}.
\end{aligned}
\end{equation}
\end{Lemma}}
{Similarly,  {Lemma \ref{theoremricecom}} enables us to apply {Theorem \ref{coretheorem1}} and {Corollary \ref{coro1}} to upper bound the BLER of Spinal codes. The following corollary reveals some key properties of $G_{\text{Rician}}(K)$.}
\begin{corollary}\label{coro3}
	{The following assertions hold true:}
	
	{($i$) $G_{\text{Rician}}(K)$ is monotonically decreasing with $K$;}
	
	{($ii$) For Rayleigh fading channels, we can calculate the expectation term by $G_{\text{Rician}}(0)$ to determine the upper bound;}
	
	{($iii$) For AWGN channels, we can calculate the limit $\lim\limits_{K\rightarrow\infty}G_{\text{Rician}}(K)$ to determined the upper bound, where
	\begin{equation}\label{awgn}
	\lim\limits_{K\rightarrow\infty}G_{\text{Rician}}(K)=\exp\left\{-\frac{\Omega|\beta-\alpha|^2}{4\gamma\sigma^2\sin^2\theta_t}\right\}.
\end{equation}} 
\end{corollary}
\noindent {\textit{Proof.} See Appendix \ref{proof:coro2}.} \finished

\section{Optimal Transmission Scheme under ML Decoding} \label{sectionnew}
This section aims at optimizing the transmission scheme of Spinal codes. We formulate a BLER minimization problem constrained by a fixed coding rate and find that for ML-decoded Spinal codes, transmitting tail symbols consistently leads to the optimal solution. This provides theoretical support for the heuristic of transmitting tail symbols, as proposed in previous literature.  
\subsection{Problem Formulation and Solution}
Leveraging the upper bound on the BLER of Spinal codes given in {Theorem \ref{coretheorem}}, denoted by $P_e^{\text{Spinal}}$, we could establish an optimization problem to optimize Spinal codes' transmission pattern. The optimization problem is explicitly given as:
\begin{problem}\label{p1}
		$\begin{aligned} \label{Optim}
			\min_{\mathbf{L}} \,& P_e^{\text{Spinal}}~~ \mathrm{s.t.} \sum_{i=1}^{n/k} \ell_{i} = N; \ell_i \in \mathbb{N}^{+}, i \in[n/k].
		\end{aligned}$ 
\end{problem}
{Problem \ref{p1} aims to find the optimal allocation of transmitted symbols among different spine values $\mathbf{L} = [\ell_1,\ell_2,\ldots,\ell_{n/k}]$ to minimize the BLER upper bound $P_e^{\text{Spinal}}$ while maintaining a fixed overall code rate. Although this integer planning problem is challenging to solve directly, our interest is primarily in exploring potential patterns in optimal solutions rather than developing efficient algorithms. As part of our preliminary investigation, we implemented a simple greedy approach (Algorithm \ref{Algorithm 1}) to observe solution characteristics. The solution's dynamics, with parameters $r=3$ and $N=19$, are demonstrated in Fig. \ref{fig2}, which revealed intriguing patterns that motivated our subsequent theoretical analysis.}
\begin{figure}
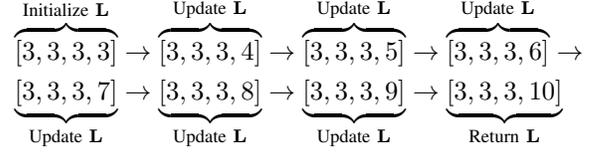

	\begin{equation}
	\begin{aligned}
	\overbrace{[3,3,3,3]}^{\text{Initialize } \mathbf{L}}&\rightarrow\overbrace{[3,3,3,4]}^{\text{Update }\mathbf{L}}\rightarrow\overbrace{[3,3,3,5]}^{\text{Update }\mathbf{L}}\rightarrow\overbrace{[3,3,3,6]}^{\text{Update }\mathbf{L}}\rightarrow\\\underbrace{[3,3,3,7]}_{\text{Update }\mathbf{L}}&\rightarrow\underbrace{[3,3,3,8]}_{\text{Update }\mathbf{L}}\rightarrow\underbrace{[3,3,3,9]}_{\text{Update }\mathbf{L}}\rightarrow\underbrace{[3,3,3,10]}_{\text{Return }\mathbf{L}}
	\end{aligned}\nonumber
	\end{equation}
	\caption{A dynamic solution process of Algorithm \ref{Algorithm 1}. Parameter is set as $n=8$, $k=2$, $r=3$, and $N=19$.} \label{fig2}
\end{figure}

\begin{algorithm}
	\caption{The greedy baseline algorithm for solving {Problem \ref{p1}}}
	\label{Algorithm 1}
	\LinesNumbered
	\KwIn{Initialize number of transmitted passes $p_{\textbf{ini}}$; Preset the target number of pass $N$ (make sure $p_{\textbf{ini}}n/k \leq N$);}
	\KwOut{The number of symbols generated from each spine value $\mathbf{L} = [\ell_1,\ell_2,\cdots,\ell_{n/k}]$;}
	Initialization: $\mathbf{L} = [p_{\textbf{ini}},p_{\textbf{ini}},\cdots,p_{\textbf{ini}}]$, $N \geq p_{\textbf{ini}}n/k$ \;
    Calculate $P_e^{\mathrm{U}}$ by applying {{Theorem \ref{coretheorem}}}\;
    \While {$\sum_{i=1}^{n/k} \ell_{i} < N$}
    {
    	\For{$i \gets 1$ to $n/k$}
    	{
    		Update the decision variable: $\ell_i \gets \ell_i+1$\;
    		Calculate BLER bound $P_{e,i}^{\mathrm{U}}$ \;
    		Restore the decision variable: $\ell_i \gets \ell_i-1$\;
    	}	
    	Search $d=\arg\min_{i}P_{e,i}^{\mathrm{U}}$\;		
    	Update $\ell_d \gets \ell_d+1$, $P_e^{\mathrm{U}} \gets P_{e,d}^{\mathrm{U}}$;
    }
    \KwRet{$\mathbf{L}$}
\end{algorithm} 

A distinct trend emerges in Fig. \ref{fig2}: the tail transmitting pattern (TTP), in which iterations consistently transmit tail symbols. {This observation led to our central theoretical discovery: the optimality of TTP for ML-decoded Spinal codes, which we rigorously prove in the next section.}

\subsection{Optimality of the TTP Scheme}\label{proof}
It's important to note that while the TTP scheme's efficacy for Spinal codes was empirically identified in \cite{2012Spinal}, a theoretical basis explaining its effectiveness remained unexplored. This work fills that gap by theoretically substantiating the TTP scheme's optimality.

\begin{theorem} \label{optiaml scheme}
The TTP scheme is optimal for Problem \ref{p1}.
\end{theorem}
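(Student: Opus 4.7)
The plan is to exploit the fact that $P_e^{\mathrm{U}}$ depends on the allocation $\mathbf{L}=(\ell_1,\ldots,\ell_{n/k})$ only through the tail sums $L_a=\sum_{i=a}^{n/k}\ell_i$, and to show via an exchange argument that shifting mass toward the last segment can only weakly increase every $L_a$, hence weakly decrease $P_e^{\mathrm{U}}$.

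First I would establish that $\mathscr{F}(L_a,\sigma)$ is non-increasing in $L_a$. Each summand in \eqref{allinone} has the form $b_t\,B_t^{L_a}$, where $B_t \triangleq \sum_{\beta_i,\beta_j \in \Psi} 2^{-2c}\,\mathbb{E}_H\!\left[\exp\!\left\{-|H(\beta_i-\beta_j)|^2/(4\sigma^2\sin^2\theta_t)\right\}\right]$. Since there are exactly $2^{2c}$ pairs $(\beta_i,\beta_j)$ and every expectation lies in $(0,1]$ by the Lemma on existence of the bound, we have $B_t\in(0,1]$. Therefore $B_t^{L_a}$ is non-increasing in $L_a$; because $b_t>0$, so is $\mathscr{F}(L_a,\sigma)$, and the clipping $\min\{1,\cdot\}$ in \eqref{epsilona} preserves this property, making each $\epsilon_a$ non-increasing in $L_a$.

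Next I would note that $P_e^{\mathrm{U}}=1-\prod_{a}(1-\epsilon_a)$ is coordinate-wise non-decreasing in each $\epsilon_a$, since every factor $1-\epsilon_a\in[0,1]$. Consequently, any perturbation of $\mathbf{L}$ that weakly increases every $L_a$ weakly decreases $P_e^{\mathrm{U}}$. The core step is then a pairwise swap: given any feasible allocation $\mathbf{L}$ that is not the TTP, pick the smallest index $i<n/k$ with $\ell_i$ above its minimum admissible value, and form $\mathbf{L}'$ by decrementing $\ell_i$ by one and incrementing $\ell_{n/k}$ by one. The budget constraint $\sum_i \ell_i = N$ and feasibility $\ell_i\in\mathbb{N}^{+}$ are preserved, and on the tail sums we have $L_a'=L_a$ for $a\leq i$ (both $\ell_i$ and $\ell_{n/k}$ remain in the sum), and $L_a'=L_a+1$ for $i<a\leq n/k$ (since $\ell_i$ drops out while $\ell_{n/k}$ stays). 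Hence $L_a'\geq L_a$ for every $a$, and by the monotonicity established above, $P_e^{\mathrm{U}}(\mathbf{L}')\leq P_e^{\mathrm{U}}(\mathbf{L})$. Iterating this swap terminates at the TTP allocation while weakly decreasing $P_e^{\mathrm{U}}$ at every step, which establishes optimality.

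The main obstacle I anticipate is verifying the monotonicity of $\mathscr{F}(L_a,\sigma)$ cleanly, in particular ensuring that the base $B_t$ is strictly less than $1$ for any non-degenerate channel so that each swap yields a strict, not merely weak, improvement. Once this is secured, the remainder of the argument is a routine combinatorial bookkeeping on the tail sums together with the elementary coordinate-wise monotonicity of $1-\prod_a(1-\epsilon_a)$.
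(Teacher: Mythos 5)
Your proposal is correct and follows essentially the same route as the paper: the same key lemma that $\epsilon_a$ is non-increasing in the tail sum $L_a$ (via the base $B_t\in(0,1]$), the same coordinate-wise monotonicity of $1-\prod_a(1-\epsilon_a)$, and the same observation that moving symbols to the last segment weakly increases every $L_a$ — the paper just does the comparison in one shot with increments $\delta_i$ rather than by iterated unit swaps. Your anticipated obstacle about strictness of $B_t<1$ is a non-issue, since weak improvement at each step already suffices for optimality of the TTP.
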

\begin{proof}
We first examine the relationship between the BLER and code length given a fixed code rate. 
\begin{Lemma} \label{Lemma_optm}
	$\epsilon_a$ is non-increasing with $L_a$ for $\forall 1 \leq a \leq n/k$.
\end{Lemma}
\begin{proof}
	Note that $\epsilon_a$ is non-increasing with $\mathscr{F}(L_a,\sigma,\gamma)$, the monotony of $\epsilon_a$ w.r.t $L_a$ is equivalent to the monotony of $\mathscr{F}(L_a,\sigma,\gamma)$ w.r.t $L_a$. To discuss the monotony of $\mathscr{F}(L_a,\sigma,\gamma)$ concerning $L_a$, it's essential to ascertain whether the base of the exponential function inherent in (\ref{allinone}) exceeds $1$. This base is defined as  $ \sum_{\beta,\alpha \in \Psi} 2^{-2c} \mathbb{E}_{R}\left[\exp\left\{\frac{-{R^2|\beta-\alpha|}^2}{4\gamma \sigma^2 \mathrm{sin}^2\theta_t}\right\}\right] $. Given that  $\exp\left\{\frac{-{R^2|\beta-\alpha|}^2}{4\gamma \sigma^2 \mathrm{sin}^2\theta_t}\right\}\le1$, its expected value also holds that $\mathbb{E}_{R}\left[\exp\left\{\frac{-{R^2|\beta-\alpha|}^2}{4\gamma \sigma^2 \mathrm{sin}^2\theta_t}\right\}\right]\le 1$. Consequently, we establish the following inequality:
	\begin{equation}
	\resizebox{1\hsize}{!}{$
	\begin{aligned}
	\sum_{\beta,\alpha \in \Psi} 2^{-2c} \mathbb{E}_{R}\left[\exp\left\{\frac{-{R^2|\beta-\alpha|}^2}{4\gamma \sigma^2 \mathrm{sin}^2\theta_t}\right\}\right]\le\sum_{\beta \in \Psi} \sum_{\alpha  \in \Psi} 2^{-2c} = 1,
	\end{aligned}$}
	\end{equation}
	which indicates that $\mathscr{F}(L_a,\sigma,\gamma)$ is decreasing with $L_a$. Then, we obtain that $\epsilon_a$ is non-increasing with $L_a$.
\end{proof}
With {Lemma \ref{Lemma_optm}} in hand, we now start to prove the \textit{optimality} of the TTP. Consider two types of transmission patterns, one is the TTP pattern, denoted by a vector
\begin{equation}\label{66}
\mathbf{L}^*=(\ell_1,\ell_2,\cdots,\ell_{n/k}+M),
\end{equation}
where $(\ell_1,\ell_2,\cdots,\ell_{n/k})$ is the initialization transmission pattern. The other could be arbitrary patterns, denoted by \begin{equation}\label{52}\mathbf{L} = (\ell_1+\delta_1,\ell_2+\delta_2,\cdots,\ell_{n/k}+\delta_{n/k}), \delta_i\in\mathbb{N}.\end{equation}
To ensure that both the aforementioned patterns align with the same code rate, they must satisfy that $\left\|\mathbf{L}^*\right\|_1=\left\|\mathbf{L}\right\|_1$, which is equivalent to the condition $M = \sum_{j=1}^{n/k}\delta_i$. Subsequently, we compare the upper bounds on the BLER of Spinal codes with respect to $\mathbf{L}^*$ and $\mathbf{L}$, respectively. Denote $P_e(\mathbf{L})$ and $P_e(\mathbf{L}^*)$ as the upper bounds on the BLER of ML-decoded Spinal codes w.r.t $\mathbf{L}$ and $\mathbf{L}^*$. The optimality of the transmission pattern \(\mathbf{L}^*\) is equivalent to showing that its BLER, \(P_e(\mathbf{L}^*)\), is the lowest when compared to the BLER, \(P_e(\mathbf{L})\), of any other arbitrary patterns under the same code rate. This is mathematically described by the following {Proposition} \ref{Lemma6}. If we could prove that this proposition holds true, then we accomplish the proof of {Theorem \ref{optiaml scheme}}.
\begin{proposition}\label{Lemma6}
	For $\forall M,\delta_1,\cdots,\delta_{n/k} \in \mathbb{N}$ such that $M = \sum_{j=1}^{n/k} \delta_{j}$, $P_e({\mathbf{L}}^*) \leq P_e(\mathbf{L})$.
\end{proposition}
\begin{proof}
Performing (\ref{eq10}) yields 
\begin{equation}
\begin{aligned}
P_e(\mathbf{L}) &= 1 - \prod_{a=1}^{n/k}(1-\epsilon_a(\mathbf{L})),\\P_e({\mathbf{L}^*}) &= 1 - \prod_{a=1}^{n/k}(1-\epsilon_a({\mathbf{L}^*})).
\end{aligned}
\end{equation}
Thus, it is natural to find that $\frac{\delta P_e}{\delta \epsilon_i}=\prod_{a\in[n/k]/ i}(1-\epsilon_a)\ge0$ holds for $\forall i\in[n/k]$. This indicates that $P_e$ is increasing with $\epsilon_i$ for $\forall i\in[n/k]$. Thus, we could initially explore the stronger inequalities such that $\epsilon_a({\mathbf{L}^*}) \leq \epsilon_a(\mathbf{L})$ for $1 \leq a \leq n/k$. If these strong inequalities $\epsilon_a({\mathbf{L}^*}) \leq \epsilon_a(\mathbf{L})$ for $1 \leq a \leq n/k$ hold true, then establishing the weaker inequality $P_e({\mathbf{L}}^*) \leq P_e(\mathbf{L})$ becomes straightforward.

Fortunately, we can indeed prove that the strong inequalities $\epsilon_a(\mathbf{L}^*) \leq \epsilon_a(\mathbf{L})$ for $1 \leq a \leq n/k$ hold true. This is achieved by leveraging the monotonically decreasing nature of $\epsilon_a$ w.r.t $L_a$ in {Lemma \ref{Lemma_optm}}. If we could prove $\mathbf{L}^*_a \geq \mathbf{L}_a$, where $\mathbf{L}^*_a$ and $\mathbf{L}_a$ are defined as the cumulative transmitted symbols after the $a$-th segment of Spinal codes, $\epsilon_a(\mathbf{L}^*) \leq \epsilon_a(\mathbf{L})$ naturally holds. 

According to the definitions of $\mathbf{L}^*_a$ and $\mathbf{L}_a$, together with (\ref{66}) and (\ref{52}), we have that
\begin{equation}
\mathbf{L}_a = \sum_{i=a}^{n/k}\ell_i + \sum_{i=a}^{n/k} \delta_i,\quad
\mathbf{L}^*_a = M + \sum_{i=a}^{n/k}\ell_i.\label{86}
\end{equation} Subtract $\mathbf{L}_a$ from $\mathbf{L}^*_a$, we have
\begin{equation}
\mathbf{L}^*_a - \mathbf{L}_a = M- \sum_{i=a}^{n/k} \delta_i.
\end{equation}
Note that $M=\sum_{i\in[n/k]}\delta_i$, we have $\mathbf{L}^*_a - \mathbf{L}_a=\sum_{i=1}^{a-1}\delta_i\ge0$.
With $\mathbf{L}^*_a \ge {\mathbf{L}}_a $, we could indicate by {Lemma \ref{Lemma_optm}} that $\epsilon_a(\mathbf{L}^*) \leq \epsilon_a(\mathbf{L})$, and thus accomplish the proof such that $P_e(\mathbf{L}^*) \leq P_e(\mathbf{L})$.
\end{proof}
Now that {Proposition \ref{Lemma6}} holds true, we then accomplish the proof of {Theorem \ref{optiaml scheme}}.
\end{proof}
\begin{figure}[t]
	\centering
	\includegraphics[width=	1\linewidth]{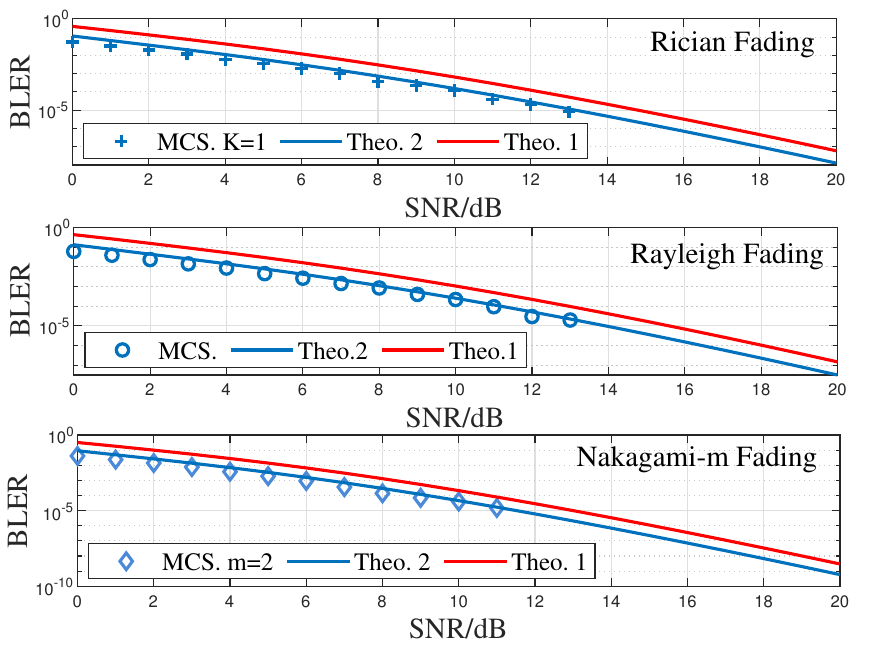}
	\caption{Upper Bounds vs. Monte Carlo Simulations (MCS).  BLER of Spinal codes with $n = 8, v = 32, pass = 6, c = 8$ and $k = 2$ over complex fading channels with $\Omega = 1$.}
	\label{figure3}
\end{figure}
\section{Simulations}\label{sectionV}
In this section, we conduct simulations to verify the obtained bounds and to validate the insights.
\subsection{Explicit Upper Bounds vs. Monte Carlo Simulations}
\begin{table*}[t]
	\centering
	\caption{{Relative Tightness Comparison Between Theoretical Bounds and Monte Carlo Simulations $\eta$}}
	\label{tab:tightness}
	\begin{tabular}{|c|c|c|c|c|c|c|}
		\hline
		\multirow{2}{*}{SNR (dB)} & \multicolumn{2}{c|}{Rayleigh Fading} & \multicolumn{2}{c|}{Rician Fading ($K=1$)} & \multicolumn{2}{c|}{Nakagami-$m$ ($m=2$)} \\
		\cline{2-7}
		& Theo. 1 (\textbf{Benchmark}) & Theo. 2 & Theo. 1 & Theo. 2 & Theo. 1 & Theo. 2 \\
		\hline
		3 & 5.44 & \cellcolor{gray!20}{\textbf{0.78} $\color{blue}\downarrow$} & 5.36 & \cellcolor{gray!20}{\textbf{0.72} $\color{blue}\downarrow$} & 6.09 & \cellcolor{gray!20}{\textbf{0.82}  $\color{blue}\downarrow$} \\
		\hline
		6 & 5.12 & \cellcolor{gray!20}{\textbf{0.59} $\color{blue}\downarrow$} & 5.23 & \cellcolor{gray!20}{\textbf{0.58} $\color{blue}\downarrow$} & 6.23 & \cellcolor{gray!20}{\textbf{0.71} $\color{blue}\downarrow$} \\
		\hline
		9 & 4.53 & \cellcolor{gray!20}{\textbf{0.36} $\color{blue}\downarrow$} & 5.18 & \cellcolor{gray!20}{\textbf{0.48} $\color{blue}\downarrow$} & 6.65 & \cellcolor{gray!20}{\textbf{0.69} $\color{blue}\downarrow$} \\
		\hline
	\end{tabular}
\end{table*}
Given the exponential complexity of ML-decoding, we opt for a relatively minimal value of $n=8$ for the message size. We set the number of passes as $L = 6$ to facilitate a manageable ML-decoding Monte Carlo simulation setup. The parameter $v$ is designated as $v = 32$, as substantiated by {Property \ref{property2}} in Appendix \ref{property}, elucidating that a hash collision is anticipated to occur once per $2^{32}$ hash function invocations on average. Additionally, we set $N=20$ and {$\theta_t=\frac{t\pi}{2N}, t\in[N]$} in {Lemma \ref{theoremnaka}} to {Lemma \ref{theoremricecom}} to ensure the precision of upper bounds approximations. The sample size for Monte Carlo simulations is set as $10^6$ to calculate the average BLER.  All channel mean square values are normalized by setting $\Omega = 1$. For complex Nakagami-m fading channels, the fading parameter is fixed at $m=2$. For complex Rician fading channels, the Rician factor is set to $K=1$.

\textcolor{black}{Fig. \ref{figure3} demonstrates that our derived bounds remain tight across a range of fading channels and SNR conditions. This illustrates our unified approach's robustness in providing tight BLER upper bounds for various fading scenarios. Notably, the approach in {Theorem \ref{coretheorem}} is tighter than that {Theorem \ref{coretheorem1}}, which is based on the \textit{Gallager bound}, consistent with our theoretical insight shown in {Theorem \ref{compare}}.}

{To quantitatively evaluate the tightness of our bounds, we introduce the \textit{Relative Error} metric defined as:}
\begin{equation}
	\eta\triangleq\frac{\text{BLER}_{bound} - \text{BLER}_{sim}}{\text{BLER}_{sim}},
\end{equation}
where $\text{BLER}_{bound}$ represents the BLER upper bound and $\text{BLER}_{sim}$ represents the aproximated BLER obtained through Monte Carlo Simulations. Table \ref{tab:tightness} presents the \textit{relative error} values $\eta$ at selected SNR points for different fading channels. The results reveal that {Theorem \ref{coretheorem}} achieves approximately 90\% reduction in relative error compared to {Theorem \ref{coretheorem1}} across all channel models.

\subsection{Upper Bounds Under Different Parameters Setup}
\begin{figure}[t]
	\centering
	\includegraphics[width=	1\linewidth]{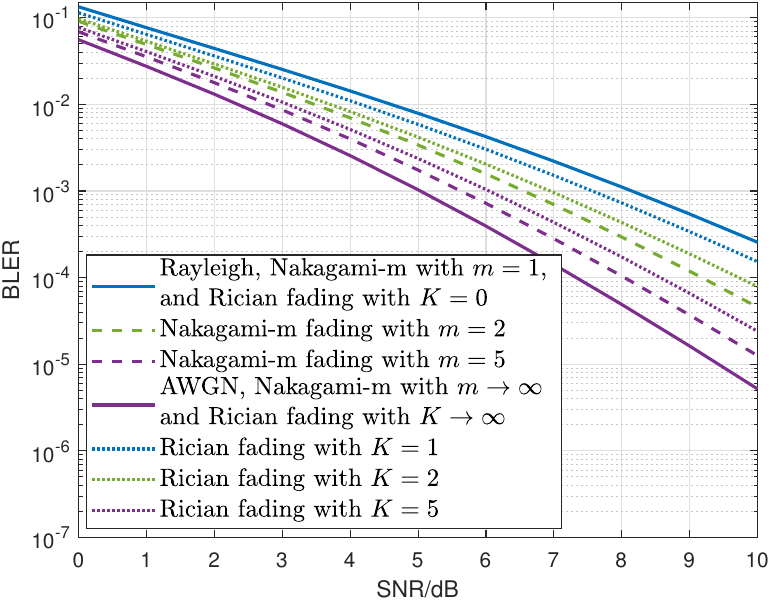}
	\caption{{Upper bounds under different parameters setup. Here $n = 8, v = 32, pass = 6, c = 8, k = 2$ and $\Omega = 1$. The upper bounds are obtained by {Theorem \ref{coretheorem}}.}}
	\label{figure4}
\end{figure}
Fig. \ref{figure4} illustrates various BLER upper bounds under different fading parameter settings, which verifies some important insights:
\subsubsection{\textbf{Nagagami-m Fading Channel}}

{\noindent $\bigcdot$ Fig. \ref{figure4} demonstrates a trend of a decreasing BLER upper bound with increasing $m$ values in the Nakagami-m fading channel, which aligns with the nature of the Nakagami-m model. This trend, which has been revealed in Corollary \ref{coro2}-($i$), suggests that higher $m$ values (indicating less severe fading) result in better channel conditions, leading to lower BLER.}

   {\noindent $\bigcdot$ When $m=1$, Fig. \ref{figure4} demonstrates that the BLER upper bound for the Nakagami-m fading channel overlaps with that of the Rayleigh fading channel. This verifies Corollary \ref{coro2}-($ii$) in this paper.}

    {\noindent $\bigcdot$ When $m\to\infty$, Fig. \ref{figure4} reveals that the Nakagami-m fading channel becomes an AWGN channel. This is consistent with that as $m$ increases, the channel experiences less fading and approaches to an AWGN channel. This verifies Corollary \ref{coro2}-($iii$) in this paper.}
   
   \subsubsection{\textbf{Rician Fading Channel:}}
   {\noindent $\bigcdot$ Fig. \ref{figure4} illustrates that the BLER upper bound decreases with increasing $K$. This is because larger $K$ represents a stronger LOS signal component, which leads to improved channel conditions. This trend has been theoretically revealed in Corollary \ref{coro3}-($i$).}
   
{\noindent $\bigcdot$ When $K=0$, Fig. \ref{figure4} shows that the BLER upper bound for the Rician fading channel overlaps with that of the Rayleigh fading channel. This observation is consistent with Corollary \ref{coro3}-($ii$).}

{\noindent $\bigcdot$ When K approaches infinity, the LOS signal component dominates the received signal. In this case, Fig. \ref{figure4} shows that the bound converges to its infimum, representing an ideal AWGN channel. This trend supports the insight in Corollary \ref{coro3}-($iii$).}

\section{Conclusion} \label{sectionVI}

\textcolor{black}{This paper has derived two explicit upper bounds on the BLER of Spinal codes over real and complex fading channels. One bound is based on the variant of \textit{Gallager bound}, the other bound customize Spinal codes better and has been proved to be tighter. Leveraging the obtained bound, we have obtained the TTP scheme and have theoretically unveiled the optimality of the TTP for ML-decoded Spinal codes.}

Prospective research avenues could extend to the theoretical BLER analysis of Spinal codes across diverse channel models, decoding algorithms, and scenarios of imperfect channel estimation. Furthermore, with the derived tight explicit bounds, optimizing constellation mapping design will be effective for improving Spinal codes. Additionally, the transmission pattern can be refined in the context of practical decoding algorithms. During validations of the proposed bound, we also noticed an error floor in Spinal codes' upper bounds at high SNR. Therefore, to reveal the reason behind the error floor and explicitly derive it may be an interesting work.
\begin{appendices}
\section{The Properties of Hash and Related Inferences}\label{property}
The hash function is expressed as $\mathcal{H}: \{0,1\}^v \times \{0,1\}^k \rightarrow \{0,1\}^v$. It introduces two properties as follows.
\begin{property} \label{property1}
 {The hash function employed by Spinal codes must satisfy the pairwise independence property, as established in \cite{2012Spinal}:}
\begin{equation}
	\begin{split}
		& \mathrm{Pr} \{ \mathcal{H}(\mathbf{s},\mathbf{m}) = \mathbf{x},\mathcal{H}(\mathbf{s}',\mathbf{m}') = \mathbf{x}' \} \\
		&= \mathrm{Pr} \{ \mathcal{H}(\mathbf{s},\mathbf{m}) = \mathbf{x}\} \cdot \mathrm{Pr}\{ \mathcal{H}(\mathbf{s}',\mathbf{m}') = \mathbf{x}' \} \\
		&= 2^{-2v} ,
	\end{split}
\end{equation}
where $(\mathbf{s},\mathbf{m}) \neq (\mathbf{s}',\mathbf{m}')$.
\end{property}
	
\begin{property} \label{property2}
{A sufficiently large $v$ ensures negligible hash collision probability for distinct inputs:} 
	\begin{equation}\label{72}
		\begin{split}
			& \mathrm{Pr} \{ \mathcal{H}(\mathbf{s},\mathbf{m}) = \mathcal{H}(\mathbf{s}',\mathbf{m}') \} \\
			&= \sum_{\mathbf{x} \in {\{ 0,1 \}}^v} \underbrace{\mathrm{Pr} \{ \mathcal{H}(\mathbf{s},\mathbf{m}) = \mathbf{x},\mathcal{H}(\mathbf{s}',\mathbf{m}') = \mathbf{x} \}}_{\text{{Property} }  \ref{property1}} \\
			&= 2^v \cdot 2^{-2v} = 2^{-v} \approx 0,\ \text{iff}\ v \gg 0 ,
		\end{split}
	\end{equation}
where $(\mathbf{s},\mathbf{m})\ne(\mathbf{s}',\mathbf{m}')$ are arbitrary hash inputs.
\end{property}
 {{Property \ref{property2}} is fundamental to the reliability of Spinal codes, as it ensures that distinct inputs generate distinct spine values with high probability when v is sufficiently large, thus enabling effective error correction capbilities.}
\begin{Lemma}
	{For identical inputs and spine values at position $i-1$, the subsequent spine value at position $i$ must also be identical, formally:}
    If $\mathbf{m}_i^* = \mathbf{m}'_i$ and $ \mathbf{s}^*_{i-1} = \mathbf{s}'_{i-1}$, then $\mathbf{s}^*_i = \mathbf{s}'_i, \forall i \in [n/k]$.
\end{Lemma}
\noindent \textit{Proof.}
	From (\ref{eqhash}), we could iteratively prove this lemma. 
\finished
\section{Proof of Theorem \ref{gallagerbound1} }\label{bbb}
{We will establish the bound by examining two distinct cases:
\begin{itemize}
	\item Case I: Complex fading channels where $\gamma=1$.
	\item Case II: Real fading channels where $\gamma=2$.
\end{itemize}}
\subsection{Case I: Complex Fading Channel}
{The standard \textit{Gallager bound} for a specific fading coefficient $\mathbf{H}=(h_1,\cdots,h_L)\in\mathbb{C}^L$ is given as \cite[\textit{Eq. (10)}]{karadimitrakis2017gallager}:}
\begin{equation}\label{gallager79}
	\begin{aligned}
		&\forall \rho\in[0,1],~\Pr\{\mathcal{E}|\mathbf{H}\}\le\\
		&2^{LR}\cdot\int_{\mathbb{C}^L}\left[\sum_{\boldsymbol{\beta} \in \Psi^L} \mathcal{Q}(\boldsymbol{\beta}) f_{\mathbf{Y}}(\mathbf{y} \mid \boldsymbol{\beta},\mathbf{H})^{1 /(1+\rho)}\right]^{1+\rho} \mathrm{d} \mathbf{y}, 
	\end{aligned}
\end{equation}
{where $\mathcal{E}$ is the decoding error, $R$ is the coding rate, $L$ is the codelength,  $\mathcal{Q}(\boldsymbol{\beta})$ is the distribution of the $L$-length channel input $\boldsymbol{\beta}=(\beta_1,\cdots,\beta_L)\in\Psi^L$, $\beta_i\in\Psi$, and $f_{\mathbf{Y}}(\mathbf{y} \mid \boldsymbol{\beta},\mathbf{H})$ is the distribution of the received symbol $\mathbf{y}\in\mathbb{C}^L$ given the $L$-length channel input $\boldsymbol{\beta}$ and the fading coefficient $\mathbf{H}$. To obtain the overall BLER $\Pr\{\mathcal{E}\}$, one can average over all possible fading coefficients to obtain:}
\begin{equation}\label{eq7v2}
	\Pr\{\mathcal{E}\} = \mathbb{E}_{\mathbf{H}}[\Pr\{\mathcal{E}|\mathbf{H}\}].
\end{equation}
{Substituting (\ref{gallager79}) into \eqref{eq7v2} yields the upper bound:}
\begin{equation}\label{6}
	\resizebox{1\hsize}{!}{$
		\begin{aligned}
			&\forall\rho\in[0,1],~\Pr\{\mathcal{E}\}\le\\&
			2^{LR}\cdot\mathbb{E}_{\mathbf{H}}\left[\int_{\mathbb{C}^L}\left[\sum_{\boldsymbol{\beta} \in \Psi^L} \mathcal{Q}(\boldsymbol{\beta}) f_{\mathbf{Y}}(\mathbf{y} \mid \boldsymbol{\beta},\mathbf{H})^{1 /(1+\rho)}\right]^{1+\rho} \mathrm{d} \mathbf{y}\right].
		\end{aligned}$}
\end{equation}

With \eqref{6} in hand, one can minimize over the parameter $\rho\in[0,1]$ on the right-hand side of \eqref{6} to achieve a tight upper bound. However, 
the optimization over $\rho$ introduces substantial complexity. To simplify the analysis, we follow the approach from \cite[\textit{Theorem 2}]{UEPspinal} to {apply a relaxed bound by setting $\rho=1$ rather than finding the optimal value. While this substitution sacrifices some tightness in the bound, it yields a more tractable expression that remains analytically useful while significantly reducing computational complexity. By setting $\rho=1$, we establish the bound given in \eqref{eq82}, where \eqref{eq82}-$(a)$ establishes since the channel is fast fading flat channel with $y_i=h_i\beta_i+n_i,\forall i\in[L]$, which yields $f_{\mathbf{Y}}(\mathbf{y} \mid {\boldsymbol{\beta},\mathbf{H})}=\prod_{i=1}^L f_{Y}(y_i\mid,\beta_i,h_i)$; \eqref{eq82}-$(b)$ holds by expanding the squared summation using the identity $(\sum_{i}a_i)^2=\sum_{i,j}a_ia_j$ and applying the uniform distribution assumption where $\mathcal{Q}(\beta_i)=2^{-c}, \forall i\in[L]$; \eqref{eq82}-$(c)$ is derived by explicitly computing the expectation with respect to the channel fading coefficient $\mathbf{H}$; \eqref{eq82}-$(d)$ is obtained due to the fact that the channel fading coefficients $h_i$ are \textit{independently and identically distributed} (i.i.d); \eqref{eq82}-$(e)$ is derived from the i.i.d nature of $h_i$ and the characteristics of the fast flat fading channel, where  $y_i=h_i\beta_i+n_i,\forall i\in[L]$. These properties enable the decomposition of the complex joint multidimensional integral into a product of individual single-dimensional integrals; \eqref{eq82}-$(f)$ is obtained from the definition of the expectation and  the expansion of the $L$
-th power of a sum, where: $(\sum_{\alpha\in\Psi} f(\alpha))^L=\sum_{\alpha_1,\cdots,\alpha_L\in\Psi}f(\alpha_1)\cdots f(\alpha_L)$. \eqref{eq82}-$(g)$ is established from the following lemma:} 
\begin{figure*}[t]
	\begin{equation}\label{eq82}
		\begin{aligned}
		&\Pr\{\mathcal{E}\}\overset{\rho=1}{\le}\mathbb{E}_{\mathbf{H}}\left[2^{LR}\cdot\int_{\mathbb{C}}\left[\sum_{\boldsymbol{\beta} \in \Psi^L}\sqrt{f_{\mathbf{Y}}(\mathbf{y} \mid {\boldsymbol{\beta},\mathbf{H})}} \mathcal{Q}(\boldsymbol{\beta})\right]^2 \mathrm{d} \mathbf{y}\right]\\
		&\overset{(a)}{=}\mathbb{E}_{\mathbf{H}}\left[2^{LR}\cdot\int_{\mathbb{C}}\left[\sum_{{\beta_1},\beta_2,\cdots,\beta_L \in \Psi}\prod_{i=1}^{L} \mathcal{Q}({\beta_i})\sqrt{f_Y(y_i \mid {\beta_i},{h}_i)}\right]^2 \mathrm{d} {y}_1\cdots \mathrm{d}y_L\right]\\
		&\overset{(b)}{=}\mathbb{E}_{\mathbf{H}}\left[2^{LR}\cdot
		\underset{\substack{{\beta_1},\beta_2,\cdots,\beta_L \in \Psi\\\alpha_1,\alpha_2,\cdots \alpha_L\in\Psi}}{\sum}\int_{\mathbb{C}^L}\prod_{i}^L \mathcal{Q}(\beta_i)\mathcal{Q}(\alpha_i)
		  \sqrt{f_Y(y_i \mid {\beta_i},{h}_i)f_Y(y_i \mid {\alpha_i},{h}_i)} \mathrm{d} {y}_1\cdots \mathrm{d}y_L\right]\\
		  &\overset{(c)}{=}2^{LR}\cdot
		  \underset{\substack{{\beta_1},\beta_2,\cdots,\beta_L \in \Psi\\\alpha_1,\alpha_2,\cdots \alpha_L\in\Psi}}{\sum}\int_{\mathbb{C}^L}f_{\mathbf{H}}(\mathbf{h})\int_{\mathbb{C}^L}\prod_{i}^L\mathcal{Q}(\beta_i)\mathcal{Q}(\alpha_i)
		  \sqrt{f_Y(y_i \mid {\beta_i},{h}_i)f_Y(y_i \mid {\alpha_i},{h}_i)} \mathrm{d} {y}_1\cdots \mathrm{d}y_L \mathrm{d} \mathbf{h}\\
		  &\overset{(d)}{=}2^{LR}\cdot
		  \underset{\substack{{\beta_1},\beta_2,\cdots,\beta_L \in \Psi\\\alpha_1,\alpha_2,\cdots \alpha_L\in\Psi}}{\sum}\int_{\mathbb{C}^L}\int_{\mathbb{C}^L}\prod_{i}^L\mathcal{Q}(\beta_i)\mathcal{Q}(\alpha_i)
		  f_H(h_i)\sqrt{f_Y(y_i \mid {\beta_i},{h}_i)f_Y(y_i \mid {\alpha_i},{h}_i)} \mathrm{d} {y}_1\cdots \mathrm{d}y_L \mathrm{d} \mathbf{h}\\
		  &\overset{(e)}{=}2^{LR}\cdot
		  \underset{\substack{{\beta_1},\beta_2,\cdots,\beta_L \in \Psi\\\alpha_1,\alpha_2,\cdots \alpha_L\in\Psi}}{\sum}\prod_{i=1}^{L}\left[\mathcal{Q}(\beta_i)\mathcal{Q}(\alpha_i)\int_{\mathbb{C}}\int_{\mathbb{C}}
		  f_H(h_i)\sqrt{f_Y(y_i \mid {\beta_i},{h_i})f_Y(y_i \mid {\alpha_i},{h_i})} \mathrm{d}{y_i} \mathrm{d}{h_i}\right]\\
		  &\overset{(f)}{=}
		  2^{LR}\cdot
		  \left[\sum_{\alpha,\beta\in\Psi}\mathcal{Q}(\beta)\mathcal{Q}(\alpha)\mathbb{E}_H\left[\int_{\mathbb{C}}
		  \sqrt{f_Y(y \mid {\beta},{h})f_Y(y \mid {\alpha},{h})} \mathrm{d}{y} \right]\right]^L\\     &\overset{(g)}{=}2^{LR}\cdot
		  \left[\sum_{\alpha,\beta\in\Psi}\mathcal{Q}(\beta)\mathcal{Q}(\alpha)\mathbb{E}_H\left[\exp\left\{-\frac{|H(\beta-\alpha)|^2}{4\sigma^2}\right\} \right]\right]^L\\
		\end{aligned}
	\end{equation}
\hrulefill
\end{figure*}
{\begin{Lemma}\label{lemma9}
	For complex fading channels $f_Y(y|x,H)$ with AWGN variance $\sigma^2$, the following equality holds:
	\begin{equation}
		\begin{aligned}		\int_{\mathbb{C}}\sqrt{f_Y(y \mid {\beta},{H})f_Y(y \mid {\alpha},{H})} \mathrm{d}{y}=\exp\left\{-\frac{|H(\beta-\alpha)|^2}{4\sigma^2}\right\}.
		\end{aligned}
	\end{equation}
\end{Lemma} 
\noindent\textit{Proof.} See Appendix \ref{appi}.}\finished

{\noindent Thus, the equality \eqref{eq82} accomplished the proof.}\finished

{\subsection{Case II: Real Fading Channel}\label{proofcoro}
Over the real fading channel, we can similarly establish the following lemma:
\begin{Lemma}
	For real fading channels $f_Y(y|x,H)$ with AWGN variance $\sigma^2$, the following equality holds:
	\begin{equation}\label{57}
		\begin{aligned}		\int_{\mathbb{R}}\sqrt{f_Y(y \mid {\beta},{H})f_Y(y \mid {\alpha},{H})} \mathrm{d}{y}=\exp\left\{-\frac{|H(\beta-\alpha)|^2}{8\sigma^2}\right\}.
		\end{aligned}
	\end{equation}
\end{Lemma}
\noindent Substituting \eqref{57} into \eqref{eq4-7}-($g$) accomplishes the proof.}\finished

\section{Proof of (\ref{proof:theobound1})}\label{appendixd} 
First, we establish the following inequality:
	\begin{align}
		&\mathscr{F}(L_a,\sigma,\gamma)=\notag\\ &\quad \sum_{t\in[N]} b_t\left(\sum_{\beta,\alpha \in \Psi} 2^{-2c}\mathbb{E}_{H}\left[\exp\left\{-\frac{|H(\beta-\alpha)|^2}{4 \gamma\sigma^2 \mathrm{sin}^2\theta_t}\right\}\right]\right)^{L_a}\notag \\
		&\stackrel{(a)}{\leq} \sum_{t\in[N]} b_t\left(\sum_{\beta,\alpha \in \Psi} 2^{-2c}\mathbb{E}_{H}\left[\exp\left\{-\frac{|H(\beta-\alpha)|^2}{4\gamma \sigma^2}\right\}\right]\right)^{L_a}\notag \\
		&\stackrel{(b)}{=} \left(\sum_{\beta,\alpha \in \Psi} 2^{-2c}\mathbb{E}_{H}\left[\exp\left\{-\frac{|H(\beta-\alpha)|^2}{4\gamma \sigma^2}\right\}\right]\right)^{L_a}\sum_{t\in[N]} b_t \notag\\
		&\stackrel{(c)}{=} \frac{1}{2} \left(\sum_{\beta,\alpha \in \Psi} 2^{-2c}\mathbb{E}_{H}\left[\exp\left\{-\frac{|H(\beta-\alpha)|^2}{4\gamma \sigma^2}\right\}\right]\right)^{L_a},\label{app:ieq1}
	\end{align}
where the inequality \eqref{app:ieq1}-$(a)$  holds because $\mathrm{sin}^2\theta_t \leq 1$ for all $t \in [N]$, which results in a smaller denominator in the exponential function, thereby increasing the overall expression; the equality \eqref{app:ieq1}-$(b)$ is obtained by factored out the constant term from the summation; \eqref{app:ieq1}-$(c)$ is due to the fact that
\begin{equation}
	\sum_{t\in[N]} b_t=\sum_{t\in[N]} \frac{\theta_t-\theta_{t-1}}{\pi}=\frac{\theta_N-\theta_0}{\pi}=\frac{1}{2}.
\end{equation}
Substituting \eqref{app:ieq1} into \eqref{epsilona} yields the following inequality:
\begin{align}
	\epsilon_a^{\text{Spinal}} &= \mathrm{min} \left\{ 1,\left(2^k-1\right)2^{n-ak} \cdot \mathscr{F} \left(L_a , \sigma,\gamma \right) \right\}\notag\\
	&\stackrel{(a)}{\leq} \left(2^k-1\right)2^{n-ak} \cdot \mathscr{F} \left(L_a , \sigma,\gamma \right)\notag\\
	&\stackrel{(b)}{\leq} \left(2^k-1\right)2^{n-ak} \cdot \frac{1}{2} \biggl(\sum_{\beta,\alpha \in \Psi} 2^{-2c} \notag\\
	&\quad\quad\quad \times \mathbb{E}_{H}\Bigl[\exp\Bigl\{-\frac{|H(\beta-\alpha)|^2}{4\gamma \sigma^2}\Bigr\}\Bigr]\biggr)^{L_a}\stackrel{(c)}{=} \epsilon_a^{\text{Mid}}\label{app:final},
\end{align}
where the inequality \eqref{app:final}-$(a)$ establishes because $\min\{x,y\}\le y$; the inequality is derived from \eqref{app:ieq1}; and the equality \eqref{app:final}-$(c)$ is from the definition given in \eqref{eq:tighter}. \finished

\section{Proof of Lemma \ref{Lemma3} }\label{app:prooflemma1}
	Solving $\mathrm{Pr} \left( \mathfrak{R}\left[\mathbf{v}^{L_a} {\left(\mathbf{v}^{L_a} + 2{\mathbf{N}}^{L_a}\right)}^{\mathrm{H}}\right] \leq 0 \right)$ is challenging due to the high dimensionality of $\mathbf{v}^{L_a}$ and $\mathbf{N}^{L_a}$. To simplify, we introduce an $L_a\times L_a$ \textit{unitary matrix} $\mathbf{A}$ to rotate these vectors into a lower-dimensional space. $\mathbf{A}$, defined in $\mathbb{C}^{L_a \times L_a}$, satisfies the unitary condition $\mathbf{A}^{\mathrm{H}} \mathbf{A} = \mathbf{I}_{L_a}$.
	Without loss of generality, we assume $\mathbf{A}$ satisfies that
	\begin{equation} \label{eq4-13}
		\mathbf{A}{\left[\mathbf{v}^{L_a}\right]}^{\mathrm{H}} = {\bigg[ \left\|\mathbf{v}^{L_a}\right\|_2,\underbrace{0,\cdots,0}_{L_a-1} \bigg]}^{\mathrm{T}},
	\end{equation}
	which indicates that the \emph{unitary matrix} $\mathbf{A}$ rotates the vector $\mathbf{v}^{L_a}$ to the direction of a \textit{standard basis}.
	Leveraging $\mathbf{A}^{\mathrm{H}} \mathbf{A} = \mathbf{I}_{L_a}$, the probability of interest $\mathrm{Pr} \left( \mathfrak{R}\left[\mathbf{v}^{L_a} {\left(\mathbf{v}^{L_a} + 2{\mathbf{N}}^{L_a}\right)}^{\mathrm{H}}\right] \leq 0  \right)$ can be transformed as:
	\begin{align}\label{eq4-14}
		&\mathrm{Pr} \left( \mathfrak{R}\left[ \mathbf{v}^{L_a} \mathbf{I}_{L_a} {\left(\mathbf{v}^{L_a} + 2{\mathbf{N}}^{L_a}\right)}^{\mathrm{H}}\right] \leq 0 \right) \notag\\
		= &\mathrm{Pr} \left( \mathfrak{R}\left[ \mathbf{v}^{L_a} {\mathbf{A}}^{\mathrm{H}} \mathbf{A} {\left(\mathbf{v}^{L_a} + 2{\mathbf{N}}^{L_a}\right)}^{\mathrm{H}} \right] \leq 0 \right) \notag \\
		= &\mathrm{Pr} \left(\mathfrak{R}\left\{ {\left[ \mathbf{A}{\left[\mathbf{v}^{L_a}\right]}^{\mathrm{H}} \right]}^{\mathrm{H}} \left( \mathbf{A}{\left[\mathbf{v}^{L_a}\right]}^{\mathrm{H}} + 2\mathbf{A}{\left[\mathbf{N}^{L_a}\right]}^{\mathrm{H}} \right)\right\} \leq 0 \right).
	\end{align}
	Substitute (\ref{eq4-13}) into the RHS of (\ref{eq4-14}), we have
	\begin{equation} \label{eq4-15}
		\begin{split}
			&\mathrm{Pr} \left( \mathfrak{R}\left\{ {\left\| \mathbf{v}^{L_a} \right\|}^2 + 2\left\| \mathbf{v}^{L_a} \right\|_2 \cdot \mathbf{A}_1 {\left[\mathbf{N}^{L_a}\right]}^{\mathrm{H}}\right\} \leq 0 \right) \\&= \mathrm{Pr} \left( \mathfrak{R}\left\{ \mathbf{A}_1 {\left[\mathbf{N}^{L_a}\right]}^{\mathrm{H}} \right\} \leq -\frac{\left\| \mathbf{v}^{L_a} \right\|_2}{2} \right) .
		\end{split}
	\end{equation}
	
	To simplify the RHS of (\ref{eq4-15}), we next introduce another tool, named the \textit{isotropic} properties of random vectors. 
	\begin{Lemma}\label{lemma7}
		\cite[\textit{A. 26, Page 502}]{fundamantalWC}. If $\mathbf{w}\sim\mathcal{CN}(0,\sigma^2\mathbf{I}_{L_a})$, then $\mathbf{w}$ is \textit{isotropic}, \emph{i.e.}, $\mathbf{U} {\left[\mathbf{N}^{L_a}\right]}^{\mathrm{H}} \sim {\left[\mathbf{N}^{L_a}\right]}^{\mathrm{H}}$ for any \textit{unitary matrix} $\mathbf{U}\in\mathbb{C}^{L_a\times L_a}$.
	\end{Lemma}
	\begin{corollary}
		$\mathbf{A}_1 {\left[\mathbf{N}^{L_a}\right]}^{\mathrm{H}}\sim\mathcal{C}\mathcal{N}(0,\sigma^2)$.
	\end{corollary}
	\noindent\textit{{Proof.}}
		As ${\left[\mathbf{N}^{L_a}\right]}^{\mathrm{H}} \sim \mathcal{CN}(0,\sigma^2\mathbf{I}_{L_a})$ and $\mathbf{A}$ is the \textit{unitary matrix}, it holds from {Lemma \ref{lemma7}} that $\mathbf{A} {\left[\mathbf{N}^{L_a}\right]}^{\mathrm{H}} \sim {\left[\mathbf{N}^{L_a}\right]}^{\mathrm{H}}\sim \mathcal{CN}(0,\sigma^2\mathbf{I}_{L_a})$. Since $\mathbf{A_1}$ is the first row of the unitary matrix $\mathbf{A}$, the product $\mathbf{A}_1 {\left[\mathbf{N}^{L_a}\right]}^{\mathrm{H}}$ is also the first row of $\mathbf{A} {\left[\mathbf{N}^{L_a}\right]}^{\mathrm{H}}$, following the distribution $\mathcal{C}\mathcal{N}(0,\sigma^2)$.
	\finished
	
	Since $\mathbf{A}_1 {\left[\mathbf{N}^{L_a}\right]}^{\mathrm{H}}\sim\mathcal{C}\mathcal{N}(0,\sigma^2)$, we could rewrite $\mathbf{A}_1 {\left[\mathbf{N}^{L_a}\right]}^{\mathrm{H}}$ as $\mathbf{A}_1 {\left[\mathbf{N}^{L_a}\right]}^{\mathrm{H}}=W_R+jW_I$, with $W_R, W_I\sim\mathcal{N}(0,\sigma^2/2)$. Therefore, the RHS of (\ref{eq4-15}) can be further simplified by:
	\begin{equation} \label{eq4-21}
		\begin{split}
			\mathrm{Pr} \left(  W_R  \leq -\frac{\left\| \mathbf{v}^{L_a} \right\|_2}{2} \right) = Q\left(\frac{ \left\| \mathbf{v}^{L_a} \right\|_2 }{\sqrt{2}\sigma}\right).
		\end{split}
	\end{equation}
	We thus accomplish the proof of {Lemma \ref{Lemma3}}.\finished
\section{}\label{iidv}
\subsection{Independence and Identically Distributed (i.i.d) $V_{i,j}$}
To establish the i.i.d property of $V_{i,j} = h_{i,j}(\operatorname{f}(\mathbf{x}_{i,j}(\mathbf{M}^*)) - \operatorname{f}(\mathbf{x}_{i,j}(\mathbf{M'})))$, we consider each component separately. The i.i.d nature of $h_{i,j}$ is inherent in the flat fast fading scenario. For $\operatorname{f}(\mathbf{x}_{i,j}(\mathbf{M}^*))$ and $\operatorname{f}(\mathbf{x}_{i,j}(\mathbf{M'}))$, their i.i.d characteristics arise from the RNGs and hash functions used. The RNG, with a consistent seed $\mathbf{s}_i$, ensures the independence of generated symbols, leading to the following lemma.
\begin{Lemma}\label{lemma11-1}
	For $\forall j\ne m$, and $\mathbf{M}$, the encoded symbol $\operatorname{f}(\mathbf{x}_{i,j}(\mathbf{M}))$ is independent of $\operatorname{f}(\mathbf{x}_{i,m}(\mathbf{M}))$.
\end{Lemma}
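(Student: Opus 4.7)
The plan is to leverage the defining property of the RNG introduced in Section II. Recall that the RNG maps a spine value $\mathbf{s}_i$ to a binary pseudo-random uniform-distributed sequence $\{\mathbf{x}_{i,j}\}_{j\in\mathbb{N}^+}$, where each $\mathbf{x}_{i,j}\in\{0,1\}^c$ has marginal distribution $\mathcal{Q}(\mathbf{x}_{i,j})=2^{-c}$, a fact already invoked in Section III to simplify the Gallager bound. The natural next step is to upgrade the marginal uniformity to pairwise independence: for any $j\neq m$, the joint distribution of $(\mathbf{x}_{i,j},\mathbf{x}_{i,m})$ is uniform on $\{0,1\}^c\times\{0,1\}^c$ with mass $2^{-2c}$, which is precisely the product of the marginals. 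This is the idealized RNG assumption underpinning the whole encoding model of Spinal codes.

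Next, I would push this independence through the deterministic constellation map $\operatorname{f}$. Since $\operatorname{f}:\{0,1\}^c\rightarrow\Psi$ is measurable and $\operatorname{f}(\mathbf{x}_{i,j}(\mathbf{M}))$, $\operatorname{f}(\mathbf{x}_{i,m}(\mathbf{M}))$ are therefore measurable functions of the independent pair $(\mathbf{x}_{i,j}(\mathbf{M}),\mathbf{x}_{i,m}(\mathbf{M}))$, the standard fact that measurable functions of independent random variables are independent closes the argument. Concretely, for any $\beta_i,\beta_j\in\Psi$,
\begin{equation*}
\Pr\{\operatorname{f}(\mathbf{x}_{i,j}(\mathbf{M}))=\beta_i,\operatorname{f}(\mathbf{x}_{i,m}(\mathbf{M}))=\beta_j\}=|\operatorname{f}^{-1}(\beta_i)|\cdot|\operatorname{f}^{-1}(\beta_j)|\cdot 2^{-2c},
\end{equation*}
which factorizes into the product of the two marginal probabilities, establishing independence.

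The only delicate point I anticipate is that the pairwise independence of RNG outputs is an idealized modeling assumption rather than a derived identity. If a more rigorous treatment were desired, I would mirror the argument used for Property \ref{property2} in Appendix \ref{property}: replace the RNG by a pairwise-independent hash family analogous to Property \ref{property1}, bound the residual correlation by $2^{-v}$, and then invoke $v\gg 0$ to obtain the ``effectively independent'' conclusion. This mirrors the same $v\gg 0$ hypothesis already threaded through Theorem \ref{coretheorem1} and Theorem \ref{coretheorem}, so no additional assumptions are introduced.
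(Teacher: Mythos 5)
Your proposal is correct and takes essentially the same route as the paper, which offers no formal proof at all for this lemma but simply asserts that ``the RNG, with a consistent seed $\mathbf{s}_i$, ensures the independence of generated symbols.'' You make explicit the two ingredients the paper leaves implicit --- the idealized pairwise-uniformity of the RNG outputs and the fact that the deterministic map $\operatorname{f}$ preserves independence --- so your version is a strictly more detailed rendering of the same argument.
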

  The hash function $\mathcal{H}$ ensures that the input and output of the function are independent with each other, \emph{i.e.}, $\forall 1\le i\le n/k-1$, $\mathbf{s}_i$ is independent with $\mathbf{s}_{i+1}$. In this manner, due to the iteration structure with $\mathbf{s}_i = \mathcal{H}(\mathbf{s}_{i-1},\mathbf{m}_i)$, we have
  \begin{Lemma}\label{lemma11}
  	For $\forall i\ne j$, it holds that $\mathbf{s}_i$ is independent of $\mathbf{s}_j$.
  \end{Lemma}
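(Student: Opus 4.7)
The plan is to prove $\mathbf{s}_i \perp \mathbf{s}_j$ (WLOG $i < j$) by induction on the gap $k = j - i$, reducing each step to \textbf{Property \ref{property1}} in a collision-free setting. The base case $k = 1$ is the adjacent-spine statement preceding this lemma: conditioning on any fixed $\mathbf{s}_i = x$ and invoking the marginal of \textbf{Property \ref{property1}}, $\mathcal{H}(x, \mathbf{m}_{i+1})$ is uniform on $\{0,1\}^v$, a distribution that does not depend on $x$, so $\mathbf{s}_{i+1}$ is independent of $\mathbf{s}_i$.

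For the inductive step, I would assume $\mathbf{s}_i \perp \mathbf{s}_{j-1}$ and consider $\mathbf{s}_j = \mathcal{H}(\mathbf{s}_{j-1}, \mathbf{m}_j)$. Condition on the path $(\mathbf{s}_0, \ldots, \mathbf{s}_{j-1}) = (a_0, \ldots, a_{j-1})$ and let $\mathcal{E}$ be the event that the $j$ hash inputs $(a_{k-1}, \mathbf{m}_k)$, $k = 1, \ldots, j$, are pairwise distinct. On $\mathcal{E}$, the inputs $(a_{i-1}, \mathbf{m}_i)$ and $(a_{j-1}, \mathbf{m}_j)$ are distinct, so \textbf{Property \ref{property1}} applies to these two specific evaluations and yields $\Pr(\mathbf{s}_i = x, \mathbf{s}_j = y \mid \mathcal{E}) = 2^{-2v} = \Pr(\mathbf{s}_i = x \mid \mathcal{E}) \Pr(\mathbf{s}_j = y \mid \mathcal{E})$. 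A union bound combined with \textbf{Property \ref{property2}} gives $\Pr(\overline{\mathcal{E}}) \le \binom{j}{2} 2^{-v}$, which is negligible in the $v \gg 0$ regime assumed throughout the paper, so marginalizing out the conditioning preserves the factorization up to an $o(1)$ error.

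The main obstacle is the mismatch between \textbf{Property \ref{property1}}, stated for \emph{deterministic} hash inputs, and the Spinal recursion $\mathbf{s}_k = \mathcal{H}(\mathbf{s}_{k-1}, \mathbf{m}_k)$ whose input is itself random. The conditioning-on-paths argument above is the standard remedy: fix the path to reduce to a deterministic-input setting, apply \textbf{Property \ref{property1}} to the two specific evaluations producing $\mathbf{s}_i$ and $\mathbf{s}_j$, and absorb the collision-error via \textbf{Property \ref{property2}}. A secondary subtlety is that pairwise independence does not strictly yield joint independence of all $j$ hash evaluations along the chain, so to guarantee that the conditional distribution of $\mathbf{s}_j$ given the full earlier path is exactly uniform one must implicitly treat $\mathcal{H}$ as a random oracle (or assume higher-order independence), consistent with the standard heuristic used in the Spinal-codes literature.
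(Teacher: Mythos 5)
Your proposal is correct in substance and is considerably more detailed than what the paper actually provides: the paper's ``proof'' of Lemma \ref{lemma11} is a single sentence asserting that the hash function makes $\mathbf{s}_i$ independent of $\mathbf{s}_{i+1}$ and that the iteration structure then propagates this --- there is no conditioning-on-paths argument, no collision accounting, and no induction. Your route (fixing the realized path $(\mathbf{s}_0,\ldots,\mathbf{s}_{j-1})$ to reduce to deterministic hash inputs, invoking \textbf{Property \ref{property1}} on the two relevant evaluations, and absorbing collisions via a union bound with \textbf{Property \ref{property2}}) mirrors the recursive collision bookkeeping the paper itself carries out later in \textbf{Lemma \ref{yn2}}, so it stays within the paper's toolkit while supplying an actual derivation where the paper has only an assertion. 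Two remarks. First, the obstacle you flag is real and is a gap in the paper's hypotheses rather than in your reasoning: pairwise independence of $\mathcal{H}$ does not control the conditional law of $\mathcal{H}(a_{j-1},\mathbf{m}_j)$ given the $j-1$ earlier evaluations that determine the path, so exact uniformity of $\mathbf{s}_j$ given the full path requires $j$-wise independence or a random-oracle model; the paper silently makes the same assumption. Second, a small imprecision in your write-up: once you condition on the full path, $\mathbf{s}_i=a_i$ is deterministic, so the displayed factorization $\Pr(\mathbf{s}_i=x,\mathbf{s}_j=y\mid\mathcal{E})=2^{-2v}$ cannot hold path-by-path; the clean statement is that, given the path and the no-collision event, $\mathbf{s}_j$ is uniform and hence independent of the entire path (a fortiori of $\mathbf{s}_i$), after which one marginalizes over paths. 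Neither remark invalidates your approach, which is the honest way to justify a lemma the paper leaves unproved.
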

  With {Lemma \ref{lemma11}} in hand, it is then very natural to establish the following corollary:
  \begin{Lemma}\label{lemma13}
  	For $\forall i\ne j$ and $\mathbf{M}$, the symbol $\operatorname{f}(\mathbf{x}_{i,m}(\mathbf{M}))$ is independent of $\operatorname{f}(\mathbf{x}_{j,m}(\mathbf{M}))$
  \end{Lemma}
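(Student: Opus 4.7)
The plan is to reduce the claim to the already-proved independence of distinct spine values. First, I would observe that $\operatorname{f}(\mathbf{x}_{i,m}(\mathbf{M}))$ is a deterministic function of the spine value $\mathbf{s}_i$ alone: by the encoding description in Section \ref{section II}, $\mathbf{x}_{i,m}$ is produced by the RNG seeded by $\mathbf{s}_i$ (with $m$ indicating which output block is extracted), and the constellation mapper $\operatorname{f}$ is itself deterministic. Hence one may write
\begin{equation}
\operatorname{f}(\mathbf{x}_{i,m}(\mathbf{M}))=g_m(\mathbf{s}_i),\qquad \operatorname{f}(\mathbf{x}_{j,m}(\mathbf{M}))=g_m(\mathbf{s}_j),
\end{equation}
where $g_m$ denotes the common deterministic map consisting of seeding the RNG, extracting the $m$-th $c$-bit block, and applying $\operatorname{f}$.

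Second, I would invoke \textbf{Lemma \ref{lemma11}}, which asserts $\mathbf{s}_i\perp\mathbf{s}_j$ whenever $i\ne j$. Third, I would apply the standard measure-theoretic fact that independence is preserved under deterministic (measurable) transformations: if $X\perp Y$ then $g_m(X)\perp g_m(Y)$. Combining these three ingredients yields $\operatorname{f}(\mathbf{x}_{i,m}(\mathbf{M}))\perp\operatorname{f}(\mathbf{x}_{j,m}(\mathbf{M}))$, which is exactly the claim.

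The only obstacle worth flagging is notational rather than technical: one must be explicit that the RNG uses \emph{no} auxiliary randomness beyond the spine-value seed, so that the only stochastic input to $g_m$ is $\mathbf{s}_i$ (respectively $\mathbf{s}_j$). This is implicit in the model introduced in Section \ref{section II}, and once it is stated, the proof reduces to a one-line application of the ``functions of independent random variables are independent'' principle. No further calculation is required.
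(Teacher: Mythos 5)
Your proposal is correct and follows essentially the same route as the paper: the paper presents Lemma \ref{lemma13} as an immediate consequence of Lemma \ref{lemma11} (independence of distinct spine values), leaving implicit exactly the two facts you make explicit, namely that the encoded symbol is a deterministic function of the spine value alone and that measurable functions of independent random variables are independent. Your version is simply a more carefully written-out form of the paper's one-line justification.
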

Combing {Lemma \ref{lemma11-1}} and {Lemma \ref{lemma13}} together leads to the i.i.d characteristics of $\operatorname{f}(\mathbf{x}_{i,j}(\mathbf{M}^*))$ and $\operatorname{f}(\mathbf{x}_{i,j}(\mathbf{M'}))$.
\subsection{Independence Between $h_{i,j}$, $\operatorname{f}(\mathbf{x}_{i,j}(\mathbf{M}^*))$, and $\operatorname{f}(\mathbf{x}_{i,j}(\mathbf{M}'))$}
Since $h_{i,j}$ is the channel coefficient, it is natural to obtain that $h_{i,j}$ is independent with both $\operatorname{f}(\mathbf{x}_{i,j}(\mathbf{M}^*))$, and $\operatorname{f}(\mathbf{x}_{i,j}(\mathbf{M}'))$. The remaining issue is to establish that $\operatorname{f}(\mathbf{x}_{i,j}(\mathbf{M}^*))$ is independent with $\operatorname{f}(\mathbf{x}_{i,j}(\mathbf{M}'))$ for $i\ge a$ and $j\in [\ell_i]$, where $a\triangleq\min\left\{i|\mathbf{s}^*_i=\mathbf{s}'_i\right\}$.
	Before proving this, we first introduce the following lemma:
\begin{Lemma}\label{yn2}
	If $ \mathbf{m}_a^*\ne\mathbf{m}'_{a}$, then $\forall a\le i\le n/k, \Pr\left\{\mathbf{s}^*_{i}=\mathbf{s}'_{i}\right\}\le1-(1-2^{-v})^{i-a+1}$.
\end{Lemma}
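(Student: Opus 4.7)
The plan is to prove this by induction on $i$, starting from $i = a$. Abbreviate $p_i \triangleq \Pr\{\mathbf{s}^*_i = \mathbf{s}'_i\}$ and note that the claim is equivalent to $1 - p_i \ge (1-2^{-v})^{i-a+1}$.

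For the base case $i = a$, I would observe that the hash inputs at step $a$ are $(\mathbf{s}^*_{a-1}, \mathbf{m}^*_a)$ and $(\mathbf{s}'_{a-1}, \mathbf{m}'_a)$, and since the second coordinates already differ by hypothesis, the pairs cannot be equal irrespective of whether the previous spines coincide. \textbf{Property \ref{property2}} then gives $p_a = \Pr\{\mathcal{H}(\mathbf{s}^*_{a-1},\mathbf{m}^*_a) = \mathcal{H}(\mathbf{s}'_{a-1},\mathbf{m}'_a)\} \le 2^{-v}$, matching the bound at $i = a$.

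For the inductive step, I would assume the bound at $i-1$ and condition on whether the inputs to $\mathcal{H}$ at step $i$ agree. If $\mathbf{s}^*_{i-1} = \mathbf{s}'_{i-1}$ and $\mathbf{m}^*_i = \mathbf{m}'_i$ (the only way the inputs coincide), then trivially $\mathbf{s}^*_i = \mathbf{s}'_i$; in every other case the two inputs to $\mathcal{H}$ differ and \textbf{Property \ref{property2}} caps the collision probability by $2^{-v}$. Combining these two cases gives the recurrence
\begin{equation}
p_i \le p_{i-1} + (1-p_{i-1})\,2^{-v} = p_{i-1}(1-2^{-v}) + 2^{-v},\nonumber
\end{equation}
which rearranges to $1 - p_i \ge (1-2^{-v})(1-p_{i-1})$. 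Chaining this with the inductive hypothesis $1 - p_{i-1} \ge (1-2^{-v})^{i-a}$ yields $1 - p_i \ge (1-2^{-v})^{i-a+1}$, completing the induction.

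The main subtlety to be careful about is the conditioning argument in the inductive step: the event $\{\mathbf{s}^*_i = \mathbf{s}'_i\}$ must be bounded uniformly over the randomness in the whole prefix, not just step $i$. I would handle this by writing $p_i = \Pr\{\text{inputs coincide}\} + \Pr\{\mathbf{s}^*_i = \mathbf{s}'_i,\text{ inputs differ}\}$, bounding the first term by $p_{i-1}$ and the second by $(1-p_{i-1})\cdot 2^{-v}$ via the law of total probability and \textbf{Property \ref{property2}} applied pointwise on the event that the inputs differ. This is the only place where the pairwise independence of $\mathcal{H}$ needs to be invoked carefully; the rest of the argument is a routine scalar recurrence.
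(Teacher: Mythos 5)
Your proof is correct and lands on the same bound, but the bookkeeping is organized differently from the paper's. The paper introduces the auxiliary events $\mathcal{C}_i=\bigl\{\bigcap_{j=a}^{i}\mathbf{s}_j^*\ne\mathbf{s}'_j\bigr\}$, computes $\Pr(\mathcal{C}_i)=(1-2^{-v})^{i-a+1}$ by a conditional chain, and then derives the additive recursion $\Pr\{\mathbf{s}_i^*=\mathbf{s}'_i\}\le(1-2^{-v})^{i-a}2^{-v}+\Pr\{\mathbf{s}_{i-1}^*=\mathbf{s}'_{i-1}\}$, which telescopes into the geometric sum $\sum_{j=a}^{i}(1-2^{-v})^{j-a}2^{-v}=1-(1-2^{-v})^{i-a+1}$. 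You instead run a direct induction on $p_i=\Pr\{\mathbf{s}_i^*=\mathbf{s}'_i\}$ and obtain the multiplicative recurrence $1-p_i\ge(1-2^{-v})(1-p_{i-1})$, which avoids the $\mathcal{C}_i$ machinery entirely and, as a bonus, sidesteps the paper's slightly delicate inclusion step $\Pr\bigl(\overline{\mathcal{C}_{i-1}}\cap\{\mathbf{s}_i^*=\mathbf{s}'_i\}\bigr)\le\Pr\bigl(\{\mathbf{s}_{i-1}^*=\mathbf{s}'_{i-1}\}\cap\{\mathbf{s}_i^*=\mathbf{s}'_i\}\bigr)$. One small imprecision in your write-up: letting $q=\Pr\{\text{inputs coincide at step }i\}\le p_{i-1}$, the correct split is $p_i\le q+(1-q)2^{-v}$; you cannot bound the two terms \emph{separately} by $p_{i-1}$ and $(1-p_{i-1})2^{-v}$, since $1-q\ge 1-p_{i-1}$. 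The recurrence still follows because $q\mapsto q+(1-q)2^{-v}=2^{-v}+q(1-2^{-v})$ is nondecreasing in $q$, so substituting $q\le p_{i-1}$ is legitimate — but that monotonicity observation should be stated. Both your argument and the paper's rest on the same idealization that the conditional collision probability given distinct hash inputs is $2^{-v}$ (Property~\ref{property2} applied under conditioning on the prefix), so no extra assumption is incurred.
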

\noindent\textit{Proof.}
	Denote $\mathcal{C}_i$ as the event that $\bigcap_{j=a}^{i}\mathbf{s}_{j}^*\ne\mathbf{s}'_{j}$, \emph{i.e.}, $\mathcal{C}_i\triangleq \left\{\bigcap_{j=a}^{i}\mathbf{s}_{j}^*\ne\mathbf{s}'_{j}\right\}$. Then,  by leveraging (\ref{72}), we have
	\begin{equation}
	\begin{aligned}
	\Pr\{\mathcal{C}_a\}&=1-2^{-v},\\
	\Pr\left(\mathcal{C}_{i+1}\left|\mathcal{C}_{i}\right.\right)&=1-2^{-v},\text{for }\forall a\le i\le n/k-1,
	\end{aligned}
	\end{equation}
 Therefore, since $\mathcal{C}_i\subset \mathcal{C}_{i-1}$, we have the recursive relation:
	\begin{equation}\label{6662}
	\begin{aligned}
	\Pr\left(\mathcal{C}_i\right)&=\Pr\left(\mathcal{C}_i\bigcap \mathcal{C}_{i-1}\right)=\Pr\left(\mathcal{C}_i\left|\mathcal{C}_{i-1}\right.\right)\Pr\left(\mathcal{C}_{i-1}\right)\\&=\left(1-2^{-v}\right)\Pr\left(\mathcal{C}_{i-1}\right).
	\end{aligned}
	\end{equation}
	By iteratively leveraging the above recursive relation and $\Pr\{\mathcal{C}_a\}=1-2^{-v}$, we have 
	\begin{equation}\label{75}
	\Pr\left(\mathcal{C}_i\right)=(1-2^{-v})^{i-a+1}.
	\end{equation}
	From (\ref{72}) we know that 
	\begin{equation}\label{76}
	\Pr\left(\mathbf{s}_{i}^*=\mathbf{s}'_{i}\left|\mathcal{C}_{i-1}\right.\right)=\Pr\left(\mathbf{s}_{i}^*=\mathbf{s}'_{i}\left|\mathbf{s}_{i-1}\ne\mathbf{s}'_{i-1}\right.\right)=2^{-v}.
	\end{equation}
	We then establish the recursive inequality relationship between $\Pr\left\{\mathbf{s}^*_{i}=\mathbf{s}'_{i}\right\}$ and $\Pr\left\{\mathbf{s}^*_{i-1}=\mathbf{s}'_{i-1}\right\}$ by leveraging the inequality $\Pr\{A\cap B\}\le\Pr\{A\}$ and applying (\ref{75}) and (\ref{76}):
	\begin{equation}\label{vvv}
	\begin{aligned}
	&\Pr\left\{\mathbf{s}^*_{i}=\mathbf{s}'_{i}\right\}=\Pr\left(\mathcal{C}_{i-1},\mathbf{s}^*_{i}=\mathbf{s}'_{i}\right)+\Pr\left(\overline{\mathcal{C}_{i-1}},\mathbf{s}^*_{i}=\mathbf{s}'_{i}\right)\\
	&\le\Pr\left(\mathcal{C}_{i-1}\right)\Pr\left(\mathbf{s}_{i}^*=\mathbf{s}'_{i}\left|\mathcal{C}_{i-1}\right.\right)+\Pr\left(\mathbf{s}_{i-1}=\mathbf{s}'_{i-1}, \mathbf{s}^*_{i}=\mathbf{s}'_{i} \right)\\
	&\le (1-2^{-v})^{i-a}\cdot 2^{-v}+\Pr\left(\mathbf{s}^*_{i-1}=\mathbf{s}'_{i-1}\right).
	\end{aligned}
	\end{equation}
	Iterating (\ref{vvv}) yields the inequality relationship in {Lemma \ref{yn2}}:
	\begin{equation}\resizebox{1\hsize}{!}{$
	\begin{aligned}
		\Pr\left\{\mathbf{s}_{i}^*=\mathbf{s}'_{i}\right\}\le\sum_{j=a}^{i}(1-2^{-v})^{j-a}\cdot 2^{-v}=1-(1-2^{-v})^{i-a+1}.
	\end{aligned}$}
	\end{equation}
\finished

\noindent With {Lemma \ref{yn2}}, we adopt the sandwich theorem and have
\begin{equation}
0\le\lim\limits_{v\rightarrow\infty}\Pr\left\{\mathbf{s}_{i}^*=\mathbf{s}'_{i}\right\}\le\lim\limits_{v\rightarrow\infty}1-(1-2^{-v})^{i-a+1}=0.
\end{equation}
Consequently, it follows that $\lim\limits_{v\rightarrow\infty}\Pr\left(\mathbf{s}^*_{i}=\mathbf{s}'_{i}\right)=0$. Therefore, for all $a\le i\le n/k$, we can assert that $\mathbf{s}_{i}^*\ne\mathbf{s}'_{i}$ in scenarios with sufficiently large $v$. This leads to that for $\forall a\le i\le n/k, \operatorname{f}\left(\mathbf{x}_{i,j}\left(\mathbf{M}'\right)\right)$ is independent of $\operatorname{f}\left(\mathbf{x}_{i,j}\left(\mathbf{M}^*\right)\right)$.

\section{}
{\subsection{Proof of Lemma \ref{theoremnaka}: Nakagami-m Fading}\label{casestudy1}
Over the Nakagami-m fading channel, the PDF of the modulus of $h_{i,j}$ is $f_{R}(r)=\frac{2 m^m}{\Gamma(m) \Omega^m} \cdot r^{2 m-1} \cdot \exp\{{-m r^2}/{\Omega}\}$. Thus, the expectation $\mathbb{E}_{R}\left[\exp\left\{\frac{-{R^2|\beta-\alpha|}^2}{4\gamma \sigma^2 \mathrm{sin}^2\theta_t}\right\}\right]$ can be expanded as:
\begin{align}
	&\mathbb{E}_{R}\left[\exp\left\{\frac{-{R^2|\beta-\alpha|}^2}{4\gamma \sigma^2 \mathrm{sin}^2\theta_t}\right\}\right]\notag\\
	&=\int_0^{\infty}\exp\left\{\frac{-{r^2|\beta-\alpha|}^2}{4\gamma \sigma^2 \mathrm{sin}^2\theta_t}\right\}\frac{2 m^m r^{2 m-1}}{\Gamma(m) \Omega^m}  \exp\left\{\frac{-m r^2}{\Omega}\right\}dr\notag\\
	&\stackrel{(a)}{=}\int_{0}^{\infty} \exp\left\{-\left({z}+\frac{m}{\Omega}\right){r^2}\right\} \cdot \frac{2m^m}{\Gamma(m)\Omega^m} r^{2m-1}  \mathrm{d}r\notag\\
	&\stackrel{(b)}{=}	\frac{m^m}{\Gamma(m){(z\Omega+m)}^m} \underbrace{{\int_{0}^{\infty}} \mathrm{e}^{-t} t^{m-1} \mathrm{d}t}_{\Gamma(m)}=\frac{m^m}{{(z\Omega+m)}^m},\label{naka-m}
\end{align}
where \eqref{naka-m}-($a$) is established by introducing the variable substitution $z = \frac{|\beta-\alpha|^2}{4\gamma{{\sigma}^{2}}{{\sin}^{2}}\theta_t}$; \eqref{naka-m}-($b$) follows from the the variable substitution $t = \left( z+\frac{m}{\Omega} \right)r^2$. Finally, applying $z = \frac{|\beta-\alpha|^2}{4\gamma{{\sigma}^{2}}{{\sin}^{2}}\theta_t}$ in \eqref{naka-m} and we accomplish the proof.}\finished
{\subsection{Proof of Lemma \ref{theoremricecom}: Rician Fading}\label{casestudy2}
	Over the Rician fading channel, the PDF of the modulus of $h_{i,j}$ is $f_{R}(r)=\frac{2(K+1) r}{\Omega \exp\left\{K+\frac{(K+1) r^2}{\Omega}\right\}} I_0\left(2 \sqrt{\frac{K(K+1)}{\Omega}} r\right)$. Thus, the expectation $\mathbb{E}_{R}\left[\exp\left\{\frac{-{R^2|\beta-\alpha|}^2}{4 \gamma\sigma^2 \mathrm{sin}^2\theta_t}\right\}\right]$ can be calculated by \eqref{rician}, 
	\begin{figure*}
		\begin{equation}\label{rician}
				\begin{aligned}
					\mathbb{E}_{R}\left[\exp\left\{\frac{-{R^2|\beta-\alpha|}^2}{4 \gamma\sigma^2 \mathrm{sin}^2\theta_t}\right\}\right]
					&=\int\limits_0^{\infty}\exp\left\{\frac{-{r^2|\beta-\alpha|}^2}{4 \gamma\sigma^2 \mathrm{sin}^2\theta_t}\right\}\frac{2I_0\left(2 \sqrt{\frac{K(K+1)}{\Omega}} r\right)\cdot(K+1) r}{\Omega \exp\left\{K+\frac{(K+1) r^2}{\Omega}\right\}} dr\\
					&\stackrel{(a)}{=}\frac{2(K+1){\int\limits_{0}^{\infty}} {r\exp\left\{- \left( \frac{K+1}{\Omega}+z \right)r^2\right\} I_0 \left( 2\sqrt{\frac{K(K+1)}{\Omega}}r \right) \mathrm{d}r}}{{\Omega} \mathrm{e}^{K}}\\
					&\stackrel{(b)}{=} \frac{2}{\mathrm{e}^K} \sum_{m=0}^{\infty} \frac{K^m(K+1)^{m+1}}{m ! \Gamma(m+1) \Omega^{m+1}} {\int_0^{\infty} r^{2 m+1} \exp \left\{-\left(\frac{K+1}{\Omega}+z\right) r^2\right\} \mathrm{d} r}\\&\stackrel{(c)}{=} \frac{K+1}{e^K({\Omega}z+K+1)}\cdot \sum_{m=0}^{\infty} \frac{1}{m!}{\left( \frac{K(K+1)}{{\Omega}z+K+1} \right)}^{m}\\&\stackrel{(d)}{=}\frac{K+1}{{\Omega}z+K+1} \cdot \mathrm{exp}\left\{ \frac{-K\Omega z}{{\Omega}z+K+1} \right\}
				\end{aligned}
		\end{equation}
		\hrulefill
	\end{figure*}
	where \eqref{rician}-($a$) follows by the substitution  $z = \frac{|\beta-\alpha|^2}{4\gamma{{\sigma}^{2}}{{\sin}^{2}}\theta_t}$; \eqref{rician}-($b$) is derived by expanding the Bessel function:
	\begin{equation}
		I_0(x) = \sum_{m=0}^{\infty} \frac{(-1)^m}{(m!)^2} \left( \frac{x}{2} \right)^{2m};
	\end{equation}
	\eqref{rician}-($c$) is obtained by solving the following integral
	\begin{equation}\label{eq75}
		\begin{aligned}
			&\int_0^{\infty} r^{2 m+1} \exp \left\{-\left(\frac{K+1}{\Omega}+z\right) r^2\right\} \mathrm{d} r\\&\stackrel{(a)}{=}\frac{m^m}{\Gamma(m){(z\Omega+m)}^m} \underbrace{{\int_{0}^{\infty}} \mathrm{e}^{-t} t^{m-1} \mathrm{d}t}_{\Gamma(m)}\\&=\frac{\Gamma(m+1)}{2{\left( z + \frac{K+1}{\Omega} \right)}^{m+1}},
		\end{aligned}
	\end{equation}
where \eqref{eq75}-($a$) is obtained by performing the variable substitution $t = \left( z+\frac{m}{\Omega} \right)r^2$; \eqref{rician}-($c$) follows by applying the infinite series over the exponential function $\exp\left\{x\right\}=\sum_{m=0}^{\infty}\frac{1}{m!}x^m$. Then, substituting $z = \frac{|u|^2}{4\gamma{{\sigma}^{2}}{{\sin}^{2}}\theta_t}$ back into \eqref{rician} accomplishes the proof.}\finished
{\section{Proof of Corollary \ref{coro2}}\label{proof:coro1}
	\subsection{Proof of ($i$)}
	We apply logarithmic on $G_{\text{Naka}}(m)$ and obtain}
	\begin{equation}
		\ln G_{\text{Naka}}(m)=m\ln{{\left(\frac{4\gamma m{\sigma}^2{\sin}^2{\theta_t}}{{\Omega}|\beta-\alpha|^2+4\gamma m{\sigma}^2{\sin}^2{\theta_t}}\right)}},
	\end{equation}
	{The differential of $\ln G_{\text{Naka}}(m)$ satisfies \eqref{17}, where \eqref{17} holds from the basic inequality $\ln(1-x) + x \le 0, \forall x\ge0$. This implies that $G_{\text{Naka}}(m)$ is monotonically decreasing with respect to $m$.}\finished
	\begin{figure*}
			\begin{align} \label{17}
			\frac{d \ln G_{\text{Naka}}(m)}{d m}&=\frac{{\Omega}|\beta-\alpha|^2}{{\Omega}|\beta-\alpha|^2+4\gamma m{\sigma}^2{\sin}^2{\theta_t}}+\ln\left(\frac{4\gamma m\sigma^2\sin^2\theta_t}{{\Omega}|\beta-\alpha|^2+4\gamma m{\sigma}^2{\sin}^2{\theta_t}}\right)\notag\\
			&{=}\frac{{\Omega}|\beta-\alpha|^2}{{\Omega}|\beta-\alpha|^2+4\gamma m{\sigma}^2{\sin}^2{\theta_t}}+\ln\left(1-\frac{{\Omega}|\beta-\alpha|^2}{{\Omega}|\beta-\alpha|^2+4\gamma m{\sigma}^2{\sin}^2{\theta_t}}\right)\stackrel{(a)}{\le}0.
		\end{align}
		\hrulefill
	\end{figure*}
	{\subsection{Proof of ($iii$)}
	The limit $\lim\limits_{m\rightarrow\infty}G_{\text{Naka}}(m)$ is given as:
	\begin{equation}\label{limit}
		\begin{aligned}
			&\lim\limits_{m\rightarrow\infty}G_{\text{Naka}}(m)=\lim\limits_{m\rightarrow\infty}{\left(\frac{4\gamma m{\sigma}^2{\sin}^2{\theta_t}}{{\Omega}|\beta-\alpha|^2+4\gamma m{\sigma}^2{\sin}^2{\theta_t}}\right)}^m\\&=\lim\limits_{m\rightarrow\infty}{\left(1-\frac{{\Omega}|\beta-\alpha|^2}{{\Omega}|\beta-\alpha|^2+4\gamma m{\sigma}^2{\sin}^2{\theta_t}}\right)}^m\\&\stackrel{(a)}{=}
			\exp\left\{-\frac{\Omega|\beta-\alpha|^2}{4\gamma\sigma^2\sin^2\theta_t}\right\},
		\end{aligned}
	\end{equation} 
	where \eqref{limit}-$(a)$ is established by the fundamental limit $\lim\limits_{x\rightarrow\infty}(1-1/x)^x=\exp\{-1\}$.}\finished
	{\section{Proof of Corollary \ref{coro3}}\label{proof:coro2}
	\subsection{Proof of ($i$)}
	Denote $a=\frac{\Omega|\beta-\alpha|^2}{4\gamma{{\sigma}^{2}}{{\sin}^{2}}\theta_t}$, $G_{\text{Rician}}(K)$ can be rewritten as:
	\begin{equation}\label{eq27}
		G_{\text{Rician}}(K)=\frac{K+1}{a+K+1} \cdot \mathrm{exp}\left\{ \frac{-Ka}{a+K+1}\right\}.
	\end{equation}
	The differential of $G_{\text{Rician}}(K)$ satisfies
	\begin{equation}\label{Kdecrease}
			\begin{aligned}
				&\frac{d G_{\text{Rician}}(K)}{d K}=\\&\frac{-[a^2+(K+1)(a^2+a+1)]}{(a+K+1)^3}\cdot \mathrm{exp}\left\{ \frac{-K\Omega z}{{\Omega}z+K+1} \right\}\stackrel{(a)}{<}0,
			\end{aligned}
	\end{equation}
	where \eqref{eq27}-($a$) holds because $a\ge0$ and $K\ge0$. Thus, $G_{\text{Rician}}(K)$ is monotonically decreasing with $K$.\finished
	\subsection{Proof of ($iii$)}
	Denote the first product term in $G_{\text{Rician}}(K)$ as $F_1(K)$ and the second product term as $F_2(K)$, we have
	\begin{equation}\label{26}
		\begin{aligned}			\lim\limits_{K\rightarrow\infty}G_{\text{Rician}}(K)&=\lim\limits_{K\rightarrow\infty}F_1(K)\cdot\lim\limits_{K\rightarrow\infty}F_1(K)\\&\stackrel{(a)}{=}\exp\left\{-\frac{\Omega|\beta-\alpha|^2}{4\gamma\sigma^2\sin^2\theta_t}\right\},
		\end{aligned}
	\end{equation}
	where \eqref{26}-($a$) holds because
	\begin{equation}\label{27}
		\begin{aligned}
			\lim\limits_{K\rightarrow\infty}F_1(K)&=1,
			\lim\limits_{K\rightarrow\infty}F_2(K)&=\exp\left\{-\frac{\Omega|\beta-\alpha|^2}{4\gamma\sigma^2\sin^2\theta_t}\right\}.
		\end{aligned}
	\end{equation}} \finished
	\section{}\label{appi}	
	\begin{figure*}
		\begin{equation}\label{prooflemma9}
			\begin{aligned}		&\int_{\mathbb{C}}\sqrt{f_Y(y \mid {\beta},{H})f_Y(y \mid {\alpha},{H})} \mathrm{d}{y}\\&\stackrel{(a)}{=}\iint_{\mathbb{R}}\sqrt{f_{\mathfrak{R}[Y]}(\mathfrak{R}[y]\mid \beta,H)\cdot f_{\mathfrak{I}[Y]}(\mathfrak{I}[y]\mid \beta,H)}\times\sqrt{f_{\mathfrak{R}[Y]}(\mathfrak{R}[y]\mid \alpha,H)\cdot f_{\mathfrak{I}[Y]}(\mathfrak{I}[y]\mid \alpha,H)} ~\mathrm{d}\mathfrak{R}[y] \mathrm{d}\mathfrak{I}[y]\\
				&\stackrel{(b)}{=}\frac{1}{\pi\sigma^2}\int_{\mathbb{R}}\exp\left\{ -\frac{(\mu-\mathfrak{R}[H\beta])^2+(\mu-\mathfrak{R}[H\alpha])^2}{2\sigma^2}\right\}\mathrm{d}\mu\times\int_{\mathbb{R}}\exp\left\{ -\frac{(\nu-\mathfrak{I}[H\beta])^2+(\nu-\mathfrak{I}[H\alpha])^2}{2\sigma^2}\right\}\mathrm{d}\nu\\
				&\stackrel{(c)}{=}\frac{\exp\left\{\frac{-(\mathfrak{R}[H(\beta-\alpha)])^2+(\mathfrak{I}[H(\beta-\alpha)])^2}{4\sigma^2}\right\}}{\pi\sigma^2}\times\int_{\mathbb{R}}\exp\left\{-\frac{\left(\mu-\frac{\mathfrak{R}[H(\alpha+\beta)]}{2}\right)^2}{\sigma^2}\right\}  d\mu\times\int_{\mathbb{R}}\exp\left\{\frac{-\left(\nu-\frac{\mathfrak{R}[H(\alpha+\beta)]}{2}\right)^2}{\sigma^2}\right\}  d\nu
				\\&\stackrel{(d)}{=}\exp\left\{\frac{-(\mathfrak{R}[H(\beta-\alpha)])^2+(\mathfrak{I}[H(\beta-\alpha)])^2}{4\sigma^2}\right\}=\exp\left\{-\frac{|H(\beta-\alpha)|^2}{4\sigma^2}\right\}
			\end{aligned}
		\end{equation}
		\hrulefill
	\end{figure*}
	{The proof is accomplished by \eqref{prooflemma9}, where \eqref{prooflemma9}-($a$) is established by leveraging the independence of $\mathfrak{R}[Y]$ and $\mathfrak{I}[Y]$, which allows us to decompose the joint probability density functions:
	\begin{subequations}
		\begin{align}
			f_Y(y \mid {\beta},{H})=f_{\mathfrak{I}[Y]}(\mathfrak{R}[y]\mid \beta,H)\cdot f_{\mathfrak{I}[Y]}(\mathfrak{I}[y]\mid \beta,H),\\
			f_Y(y \mid {\alpha},{H})=f_{\mathfrak{I}[Y]}(\mathfrak{R}[y]\mid \alpha,H)\cdot f_{\mathfrak{I}[Y]}(\mathfrak{I}[y]\mid \alpha,H);
		\end{align}
	\end{subequations}
	\eqref{prooflemma9}-($b$) follows from the channel model where $\mathfrak{R}[y]=\mathfrak{R}[Hx]+\mathfrak{R}[n]$ and $\mathfrak{I}[y]=\mathfrak{I}[Hx]+\mathfrak{I}[n]$, with $\mathfrak{R}[n],\mathfrak{I}[n]\sim\mathcal{N}(0,\sigma^2/2)$. This produces the conditional distributions:
	 \begin{subequations}\label{9}
	\begin{align}
				f_{\mathfrak{R}[Y]}(\mathfrak{R}[y]\mid x,H)=\frac{1}{\sqrt{\pi\sigma^2}}\exp\left\{-\frac{\left(\mathfrak{R}[y]-\mathfrak{R}[Hx]\right)^2}{\sigma^2}\right\},\\
				f_{\mathfrak{I}[Y]}(\mathfrak{I}[y]\mid x,H)=\frac{1}{\sqrt{\pi\sigma^2}}\exp\left\{-\frac{\left(\mathfrak{I}[y]-\mathfrak{I}[Hx]\right)^2}{\sigma^2}\right\}.
			\end{align}
	\end{subequations}
	We introduce the variables $\mu=\mathfrak{R}[y],\nu=\mathfrak{I}[y]$ for clearer notations;
	\eqref{prooflemma9}-($c$) is established by expanding the quadratic terms in the exponents and factoring out terms that are independent of the integration variables. \eqref{prooflemma9}-($d$) applies the standard Gaussian integral formula
	\begin{align}
		\int_{\mathbb{R}}\exp\left\{-\frac{(x-c)^2}{b}\right\}=\sqrt{\pi b}.
	\end{align}}
\finished
	
\end{appendices}



\bibliographystyle{IEEEtran}
\bibliography{reference}

\IEEEbiographynophoto{Aimin Li}
(Graduate Student Member, IEEE) received his B.S. degree in electronics and information engineering from Harbin Institute of Technology (Shenzhen) in 2020, where
he won the Best Thesis Award. He is currently
pursuing the Ph.D. degree with the Department of Electronic Engineering, Harbin Institute of Technology (Shenzhen). From 2023 to 2024, he has visited the Institute for Infocomm Research (I$^2$R), Agency for Science, Technology, and Research (A*STAR), Singapore. His research interests include aerospace communications, advanced channel coding techniques, information theory, and wireless communications. He has served as a reviewer for IEEE JSAC, IEEE TWC, IEEE TNNLS, IEEE TCCN, IEEE ISIT, etc. He has also served as a session chair for IEEE Information Theory Workshop 2024 and IEEE Globecom 2024.

\IEEEbiographynophoto{Xiaomeng Chen}
(Graduate Student Member, IEEE) received her B.E. degree in electronic and information engineering from the Harbin
Institute of Technology (Shenzhen) in 2023, where
she won the Best Thesis Award. She is currently pursuing her M.S.
degree with the Department of Electronic Engineering, HITSZ. Her research interests include advanced
channel coding techniques, wireless communications, and information theory.

\vspace{-1cm}
\IEEEbiographynophoto{Shaohua Wu} (Member, IEEE) received the Ph.D. degree in communication engineering from the Harbin Institute of Technology (HIT), Harbin, China, in 2009. From 2009 to 2011, he held a Postdoctoral position with the Department of Electronics and Information Engineering, Shenzhen Graduate School, HIT (Shenzhen), Shenzhen, China, where he has been an Associate Professor since 2012. He is also an Associate Professor of Peng Cheng Laboratory, Shenzhen. From 2014 to 2015, he was a Visiting Researcher with BBCR, University of Waterloo, Waterloo, ON, Canada. He holds more than 30 Chinese patents. He has authored or coauthored more than 100 papers in the above-mentioned areas. His current research interests include wireless image/video transmission, space communications, advanced channel coding techniques, and B5G wireless transmission technologies.
\IEEEbiographynophoto{Gary C.F. Lee} (Member, IEEE) received his B.S. degree in Electrical Engineering in Stanford University in 2016, his M.S. degree in Electrical Engineering in Massachusetts Institute of Technology (MIT) in 2019, and his P.hD. degree in Electrical Engineering in Massachusetts Institute of Technology (MIT) in 2023. He is currently a senior scientist in the Institute for Infocomm Research (I$^2$R), Agency for Science, Technology, and Research (A*STAR), Singapore. His current research interest is in wireless communications, machine learning, and signal processing.

\IEEEbiographynophoto{Sumei Sun} (Fellow, IEEE) is a Principal Scientist, Distinguished Institute Fellow, and Acting Executive Director of the Institute for Infocomm Research (I$^2$R), Agency for Science, Technology, and Research (A*STAR), Singapore. She is also holding a joint 
appointment with the Singapore Institute of Technology, and an adjunct appointment 
with the National University of Singapore, both as a full professor. Her current 
research interests are in next-generation wireless communications, cognitive 
communications and networks, industrial internet of things, communications-computing-control integrative design, joint radar-communication systems, and signal 
intelligence. She is Editor-in-Chief of the IEEE Open Journal Vehicular Technology, and 
Steering Committee Chair of the IEEE Transactions on Machine Learning in 
Communications and Networking. She is also Member-at-Large of 
the IEEE Communications Society, a member of the IEEE Vehicular Technology Society Board of Governors (2022-2024), Fellow of the IEEE and the Academy of Engineering Singapore.
\end{document}